\documentclass{article}
\usepackage[utf8]{inputenc}
\usepackage[T1]{fontenc}
\usepackage{amsmath}
\usepackage{amssymb}
\usepackage{bussproofs}
\usepackage{amsthm}
\usepackage{tikz}
\usepackage{float}
\usepackage{xcolor}
\usepackage{enumitem}
\usepackage{mathtools}
\usepackage{hyperref}
\usepackage{pgfplots}
\pgfplotsset{compat=1.18}
\newlist{induction}{description}{2}
\setlist[induction,1]{
  leftmargin=0pt,
  labelindent=\parindent,
  labelsep=0.5em,
  align=left
}
\setlist[induction,2]{
  leftmargin=\parindent,
  labelindent=2\parindent,
  labelsep=0.5em,
  align=left
}

\pgfplotsset{
    every axis/.append style={
        label style={font=\scriptsize},
        tick label style={font=\scriptsize},
        title style={font=\small}
    }
}

\newtheorem{theorem}{Theorem}
\newtheorem{corollary}{Corollary}
\newtheorem{lemma}{Lemma}

\theoremstyle{definition}

\newtheorem{example}{Example}

\newtheoremstyle{appendix}
  {6pt}
  {6pt}
  {\itshape}
  {}
  {\bfseries}
  {{\bfseries .}}
  {5pt plus 1pt minus 1pt}
  {\thmname{#1} \thmnumber{#2} \thmnote{\normalfont#3}}

\theoremstyle{appendix}

\newtheorem*{applem}{Lemma}

\title{An algorithmic construction of Nullstellensatz certificates for graph
automorphism groups}
\author{Ivan Drecun\\
\small Faculty of Mathematics, University of Belgrade\\
\small Studentski trg 16, 11000 Belgrade, Serbia\\
\small \texttt{idrecun@gmail.com}
}
\date{}

\begin{document}
\maketitle
\begin{abstract}
  
  Solvers for graph isomorphism and related problems are widely used in both
  theoretical and applied domains, but their complexity often requires users to
  trust their correctness without verification. We present a certifying
  algorithm for computing a generating set of the automorphism group of a given
  graph, meaning that it produces an independently verifiable certificate that
  the output list of automorphisms generates the entirety of the automorphism
  group. The problem of constructing a certificate is reduced to certifying
  that a system of multivariate polynomial equations is infeasible. An
  individualization-refinement algorithm produces a derivation within an
  algebraic proof system, representing a Nullstellensatz certificate of
  infeasibility. We provide an implementation of both the certifying algorithm
  and a certificate verifier, and demonstrate their viability on graphs with
  hundreds of vertices.

  Keywords: \emph{Graph automorphism; Graph isomorphism; Nullstellensatz certificate; Algebraic
  proof system; Individualization-refinement; Proof verification}

\end{abstract}

	\section{Introduction}

    An \emph{isomorphism} between two graphs is a bijection between their sets
    of vertices such that the adjacency relation is preserved. An
    \emph{automorphism} is an isomorphism from a graph $G$ to itself, with the
    set of all automorphisms forming a group under composition, called the
    \emph{automorphism group} of $G$ and denoted by $\operatorname{Aut}(G)$.
    The \emph{graph isomorphism problem} (\textsc{GI}) is the decision problem
    of determining whether two given graphs are isomorphic. The graph
    isomorphism problem has been shown to be polynomial time equivalent to
    several closely related problems. One such problem is determining a
    generating set of the graph automorphism group \cite{Mathon}. Many systems
    for solving \textsc{GI} and related problems have been developed over the
    years \cite{McKayPiperno, bliss, saucy, conauto} and it may be argued that,
    from a practical standpoint, these problems have been solved. It might come
    as a surprise, then, that the theoretical bounds for the complexity of such
    problems are still not fully understood. While it is clear that \textsc{GI}
    is in \textsc{NP} (since an isomorphism is a polynomial-size certificate),
    the relation of \textsc{GI} to either \textsc{P} or \textsc{NP}-complete is
    not known. The best known upper bound on the computational complexity of
    \textsc{GI} is quasipolynomial \cite{Babai2016}. Furthermore, \textsc{GI}
    is not known to be in \textsc{coNP}, meaning that there is no known
    polynomial-size certificate for non-isomorphism of two graphs
    \cite{Babai2018}.

    A certifying algorithm is an algorithm which, along with its output,
    produces a certificate of its correctness. The certificate should be easily
    and independently verifiable. The benefit of certifying algorithms is that
    trust may be offloaded from the potentially complex algorithm and its
    implementation to a much simpler certificate verification procedure
    \cite{Certifying}. Such algorithms are particularly useful in automated and
    interactive theorem proving, allowing for complex algorithms to be used as
    primitives within proof assistants, without requiring the underlying
    implementation itself to be verified. A practical certifying algorithm for
    graph isomorphism, and more generally for graph canonical labeling, has
    been developed in recent years \cite{LMCS}. The produced certificate and
    its verification procedure are coupled to a specific instance of McKay and
    Piperno's graph canonical labeling algorithm, and the certificate is
    represented as a proof within a relatively complex, purpose-built proof
    system.

    In this paper, we present a certifying algorithm for calculating a set of
    generators of $\operatorname{Aut}(G)$ for some graph $G$. Specifically, we
    deal with the problem of constructing a certificate that a produced set
    $a_1, \dots, a_m$ of automorphisms of $G$ generates the entirety of
    $\operatorname{Aut}(G)$. The problem is reduced to deciding the
    infeasibility of a system of multivariate polynomial equations. Namely, the
    reduction is via the typical identification of the automorphisms with
    permutation matrices $X$ satisfying $AX=XA$, where $A$ is the adjacency
    matrix of $G$ \cite{Sachs1983, DeLoera2010, BerkholzGrohe2015}, with an
    additional polynomial family for excluding all automorphisms generated by
    $a_1, \dots, a_m$. If such a system is infeasible, then it must be that all
    automorphisms are generated by $a_1, \dots, a_m$. We introduce a simple
    algebraic proof system for certifying ideal membership whose correctness we
    prove, and within which we construct a Nullstellensatz certificate of
    infeasibility \cite{Nulla}. The construction is guided by an
    individualization-refinement search procedure similar to the one described
    by McKay and Piperno. More specifically, we associate a monomial with each
    node of the search tree (after pruning), with $1$ corresponding to its
    root. We give an explicit bottom-up derivation of those monomials within
    the proof system. The verification procedure checks that the supplied
    permutations are indeed automorphisms of the input graph, and that each
    step of the derivation is a valid application of a rule of the proof
    system, with the last step producing $1$, corresponding to the
    Nullstellensatz certificate. This requires only simple polynomial
    operations, i.e. addition, multiplication and substitution. We provide an
    implementation of the certifying algorithm, as well as the certificate
    verifier, and report on the results of the experimental evaluation.

	\section{An ideal membership proof system}
  \label{sec:proof}

  In this section we introduce a simple proof system based on the
  Nullstellensatz for verifying membership of a polynomial within a
  multivariate polynomial ideal. We then introduce a polynomial encoding of a
  graph's automorphism group such that a certificate for a set of generators of
  the group may be constructed within the proof system. Some basic definitions
  and known results follow.

	\subsection{Background}

  Let $K[x_1, \dots, x_k]$ denote the \emph{multivariate polynomial ring} with
  $k$ variables over a field $K$. An ideal $I \subseteq K[x_1, \dots, x_k]$ is
  \emph{generated} by $F = \{f_1, \dots, f_n \}$, a finite subset of $K[x_1,
  \dots, x_k]$, if $I = \langle F \rangle = \{ p_1 f_1 + \dots + p_n f_n \mid
  p_1, \dots, p_n \in K[x_1, \dots, x_k] \}$.

  An \emph{affine variety} of a set $F \subseteq K[x_1, \dots, x_k]$
  is the set of all points in $K^k$ which are the zeros of all $f \in
  F$ simultaneously. More formally, the affine variety of $F$ is the
  set $V(F) = \{(x_1, \dots, x_k) \in K^k \mid f(x_1, \dots, x_k) = 0
  \text{ for all } f \in F\}$. If $V(F) = \emptyset$, we say that $F$
  is \emph{infeasible}.

  The following theorem establishes the necessary and sufficient
  condition for an ideal of $K[x_1, \dots, x_k]$ to be infeasible. It
  is the weak form of Hilbert's Nullstellensatz \cite{Hilbert}.

	\begin{theorem}[Weak Nullstellensatz]
		Let $K$ be an algebraically closed field and let $I$ be an ideal of
		$K[x_1, \dots, x_k]$. Then, $V(I) = \emptyset$ if and only if $I = K[x_1,
		\dots, x_k]$.
	\end{theorem}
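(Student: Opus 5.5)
The easy direction comes first. If $I = K[x_1,\dots,x_k]$, then $1 \in I$. A common zero of all elements of $I$ would therefore make $1$ vanish, which is impossible, so $V(I)=\emptyset$. All the work lies in the converse. I will prove its contrapositive: if $I$ is a proper ideal, then $V(I)\neq\emptyset$.

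For a proper ideal $I$, Zorn's lemma gives a maximal ideal $\mathfrak m \supseteq I$. In fact Zorn is not needed, since $K[x_1,\dots,x_k]$ is Noetherian by Hilbert's basis theorem. The quotient $L = K[x_1,\dots,x_k]/\mathfrak m$ is then a field extension of $K$, and it is finitely generated as a $K$-algebra by the images $\bar x_1,\dots,\bar x_k$.

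The key step, and the main obstacle, is Zariski's lemma: a field that is finitely generated as a $K$-algebra is a finite (hence algebraic) extension of $K$. My first attempt would be the Noether normalization argument. It produces algebraically independent elements $y_1,\dots,y_r\in L$ such that $L$ is integral over $K[y_1,\dots,y_r]$. A ring that is integral over a subring is a field exactly when the subring is a field. Since the polynomial ring $K[y_1,\dots,y_r]$ is a field only for $r=0$, we get $r=0$ and $L$ is finite over $K$.

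An alternative route, which avoids Noether normalization, is the Artin–Tate lemma. Suppose $L$ is not algebraic over $K$, and take a transcendence basis giving $K(t_1,\dots,t_r)\subseteq L$ with $L$ finite over it. Artin–Tate then shows that $K(t_1,\dots,t_r)$ is a finitely generated $K$-algebra. This is impossible for $r\ge1$: localizations at infinitely many irreducible polynomials would be required, and there are infinitely many irreducibles by Euclid's argument. Either route yields that $L/K$ is algebraic.

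Since $K$ is algebraically closed, an algebraic extension $L$ of $K$ must equal $K$. Hence each $\bar x_i$ equals some $a_i\in K$, so $x_i - a_i\in\mathfrak m$ for every $i$. Then for every $f\in I\subseteq\mathfrak m$ we have $f(a_1,\dots,a_k)\equiv f(x_1,\dots,x_k)\equiv 0 \pmod{\mathfrak m}$. The value $f(a_1,\dots,a_k)$ is a constant of $K$ lying in the proper ideal $\mathfrak m$, so it is $0$. Thus $(a_1,\dots,a_k)\in V(I)$, which completes the contrapositive.

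Since the paper cites this result as classical \cite{Hilbert}, in context it would suffice to refer to that reference. The outline above is the self-contained route I would follow, with Zariski's lemma as the only nontrivial ingredient.
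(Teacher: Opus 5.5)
Your argument is correct and is the standard one. You enlarge $I$ to a maximal ideal $\mathfrak m$, use Zariski's lemma (via Noether normalization or Artin--Tate) to see that $K[x_1,\dots,x_k]/\mathfrak m$ is algebraic over $K$ and hence equal to $K$, and read off a common zero from $x_i - a_i \in \mathfrak m$. The paper gives no proof of this theorem; it quotes it as classical background with a citation to \cite{Hilbert}. So the comparison is between a citation and a self-contained derivation. The citation keeps the paper focused on its actual contribution. Your write-up makes the dependence explicit, at the cost of importing Noether normalization (or Artin--Tate) and the fact that an integral extension of domains is a field exactly when the base ring is. In the Artin--Tate variant, the last step is cleaner stated precisely: every element of $K[f_1/g_1,\dots,f_s/g_s]$ can be written as $p/g^N$ with $g = g_1\cdots g_s$, so $1/h$ is missing for any irreducible $h \nmid g$. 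It is also worth noting which direction the paper really relies on. Corollary~\ref{cor:soundness}, and therefore the verifier, uses only the trivial implication $1 \in I \Rightarrow V(I) = \emptyset$. That implication holds over any field, including the non-algebraically-closed $\mathbb{F}_p$ the paper works over. The hard direction, where all your effort goes, enters only in the completeness corollary, which the paper states under the explicit assumption that $K$ is algebraically closed.
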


  In particular, if $I$ is generated by $F = \{f_1, \dots,
  f_n\}$, from the Nullstellensatz it follows that $f_1, \dots, f_n$
  have no common zeros (i.e. $F$ is infeasible) if and only if there
  exist polynomials $g_1, \dots, g_n \in K[x_1, \dots, x_k]$ such that
  $1 = \sum_{i=1}^n g_i f_i$. The sequence of such polynomials $g_1,
  \dots, g_n$, if they exist, is called a \emph{Nullstellensatz
  certificate}.

  We define any mapping $s : \{x_1, \dots, x_k\} \to \{x_1, \dots, x_k\}$ to be
  a \emph{substitution}. A substitution $s$ acts on $p \in K[x_1, \dots, x_k]$
  such that $p^s(x_1, \dots, x_k) = p(s(x_1), \dots, s(x_k))$. This action may
  be extended to sets of polynomials such that $F^s = \{p^s \mid p \in F\}$. We
  say that $F \subseteq K[x_1, \dots, x_k]$ is closed under a substitution $s$
  if $F^s \subseteq F$.

	\begin{example}
    Let $s: \begin{pmatrix}x & y \\ y & x \end{pmatrix}$ and $s' :
      \begin{pmatrix}x & y \\ y & y \end{pmatrix}$ be substitutions. Let $F_1 =
        \{x^2, xy, y^2\}$ and $F_2 = \{x, y, y^2\}$ be sets of polynomials.
        Then $F_1^s = \{y^2, yx, x^2\}$ and $F_2^s = \{y, x, x^2\}$, as well as
        $F_1^{s'} = \{y^2\}$ and $F_2^{s'} = \{y, y^2\}$. Clearly, $F_1$ is
        closed under $s$ and $s'$, but $F_2$ is closed only under $s'$ and not
        $s$.
  \end{example}

	\subsection{Proof system}

  We introduce a proof system for certifying the membership of a polynomial in
  a multivariate polynomial ideal. The proof system sits close to the members
  of the broader family of algebraic proof systems, such as the Nullstellensatz
  proof system \cite{Beame1996}, polynomial calculus
  \cite{CleggEdmondsImpagliazzo1996} and the ideal proof system
  \cite{GrochowPitassi2018}.

  Let $K$ be a field and $F = \{f_1, \dots, f_n\} \subseteq K[x_1, \dots, x_k]$
  a set of multivariate polynomials over $K$. The proof system consists of
  three rules for deriving polynomials contained in the polynomial ideal $I =
  \langle F \rangle \subseteq K[x_1, \dots, x_k]$.

	\begin{prooftree}
	\AxiomC{}
	\RightLabel{Tautology}
	\UnaryInfC{$f_i$}
	\end{prooftree}

	\begin{prooftree}
	\AxiomC{$p_1$}
	\AxiomC{$\dots$}
	\AxiomC{$p_m$}
    \RightLabel{Ideal [where $g_1, \dots, g_m \in K[x_1, \dots, x_k]$]}
	\TrinaryInfC{$\sum_{i=1}^m g_i p_i$}
	\end{prooftree}
	
	\begin{prooftree}
	\AxiomC{$p$}
    \RightLabel{Substitution [where $F^s \subseteq F$]}
	\UnaryInfC{$p^s$}
	\end{prooftree}

	\begin{example}
    Let $F = \{ x_1 x_2, x_1 + x_2 - 1 \}$ over $\mathbb{R}$. Note that $F$ is
    closed under $s : \begin{pmatrix} x_1 & x_2 \\ x_2 & x_1
    \end{pmatrix}$ since $(x_1 x_2)^s = x_2 x_1 = x_1 x_2$ and $(x_1 + x_2
    - 1)^s = x_2 + x_1 - 1 = x_1 + x_2 - 1$. A proof of $x_1^2 + x_2^2 - 1 \in
    \langle F \rangle$ might look like the following.
		\begin{enumerate}
			\item $x_1 x_2$ (Tautology)
			\item $x_1 + x_2 - 1$ (Tautology)
			\item $x_1^2 - x_1$ (Ideal, $x_1 \cdot (x_1 + x_2 - 1) - x_1 x_2$)
			\item $x_2^2 - x_2$ (Substitution, $(x_1^2 - x_1)^s$)
			\item $x_1^2 + x_2^2 - 1$ (Ideal, $(x_1^2 - x_1) + (x_2^2 - x_2) + (x_1 + x_2 - 1)$)
		\end{enumerate}
	\end{example}

  Note that the inclusion of the Substitution rule is not strictly necessary
  for the completeness of the proof system. However, it is important since the
  rule can be used to avoid producing duplicate sequences of steps of mutually
  symmetric parts of the proof. This is similar to the symmetry rule Krishnamurthy
  introduced for resolution \cite{Krishnamurthy1985}.

  The following two theorems establish the correctness of the proof system.
  Since in a later section we provide an explicit construction of a proof, we
  are more concerned with its soundness, rather than its completeness.

	\begin{theorem}[Soundness]
		\label{thm:soundness}
		If $q$ is derivable from $F$, then $q \in I$.
	\end{theorem}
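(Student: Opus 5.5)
The plan is to argue by induction on the length of a derivation, or equivalently on the depth of the derivation tree ending in $q$. The invariant is that every polynomial appearing in a derivation from $F$ lies in $I = \langle F \rangle$. I treat each of the three rules in turn, assuming the invariant for all premises of the rule.

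\textbf{Tautology} and \textbf{Ideal} are immediate.
\begin{itemize}
\item A Tautology step produces some $f_i \in F$. Since $f_i = 1 \cdot f_i$, we get $f_i \in I$.
\item An Ideal step produces $\sum_{i=1}^m g_i p_i$ from premises $p_1, \dots, p_m$. By the induction hypothesis each $p_i \in I$. An ideal is closed under addition and under multiplication by arbitrary ring elements, so the combination lies in $I$.
\end{itemize}

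\textbf{Substitution} is the only case needing real work. Here $p \in I$ by hypothesis, and $F^s \subseteq F$. I would first note that the map $\sigma_s : p \mapsto p^s$ is a $K$-algebra endomorphism of $K[x_1, \dots, x_k]$. It is exactly the evaluation homomorphism sending each $x_j$ to $s(x_j)$, so $(a+b)^s = a^s + b^s$, $(ab)^s = a^s b^s$, and constants are fixed.

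Now write $p = \sum_{i=1}^n h_i f_i$. Applying the homomorphism gives
\[ p^s = \sum_{i=1}^n h_i^s f_i^s . \]
By closure, each $f_i^s$ equals some $f_{j(i)} \in F$. Hence $p^s = \sum_i h_i^s f_{j(i)}$, which is a combination of generators and so lies in $I$.

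The key observation is that $s$ need not be injective, and $p \mapsto p^s$ need not be an automorphism. The argument only needs a ring homomorphism together with the inclusion $F^s \subseteq F$, which guarantees that the image of the ideal is contained in the ideal. I expect this to be the only point requiring care; it is a routine check, not a genuine obstacle. Applying the invariant to the final step of the derivation yields $q \in I$.
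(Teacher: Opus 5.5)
Your proof is correct and matches the paper's: induction on the derivation, with the Tautology and Ideal cases immediate, and the Substitution case handled by applying $s$ to a representation $p = \sum_i h_i f_i$ and then using $F^s \subseteq F$. Your remark that $p \mapsto p^s$ is a (not necessarily injective) $K$-algebra homomorphism makes explicit a step the paper uses without comment when it writes $(\sum_i g_i f_i)^s = \sum_i g_i^s f_i^s$.
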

	\begin{proof}
		We prove this by induction on the structure of the proof.

		\begin{itemize}
			\item[\textbf{T}] Since $f_i \in F$, then $f_i \in I$.
			\item[\textbf{I}] By the induction hypothesis it holds that $p_i = \sum_{j=1}^n g_{i, j}
				f_j$. Then $\sum_{i=1}^m g_i p_i = \sum_{i=1}^m g_i \sum_{j=1}^n g_{i, j} f_j
				= \sum_{j=1}^n (\sum_{i=1}^m g_i g_{i, j}) f_j \in I$.
			\item[\textbf{S}] By the induction hypothesis it holds that $p =
				\sum_{i=1}^n g_i f_i$.  Therefore $p^s = (\sum_{i=1}^n g_i f_i)^s =
				\sum_{i=1}^n g_i^s f_i^s \in \langle F^s \rangle \subseteq \langle F \rangle = I$.
		\end{itemize}
	\end{proof}

	\begin{corollary}
    \label{cor:soundness}
		If $1$ is derivable from $F$, then $V(I) = \emptyset$.
	\end{corollary}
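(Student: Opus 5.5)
The plan is to combine Theorem~\ref{thm:soundness} with the trivial direction of the Weak Nullstellensatz. Notably, this direction does not need $K$ to be algebraically closed.

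First, suppose $1$ is derivable from $F$. By Theorem~\ref{thm:soundness}, $1 \in I = \langle F \rangle$. So there exist $g_1, \dots, g_n \in K[x_1, \dots, x_k]$ with $1 = \sum_{i=1}^n g_i f_i$, which is a Nullstellensatz certificate.

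Second, suppose for contradiction that $V(I) \neq \emptyset$. Pick a point $a \in K^k$ with $f(a) = 0$ for all $f \in I$. Evaluation at $a$ is a ring homomorphism $K[x_1, \dots, x_k] \to K$. Applying it to the identity above gives $1 = \sum_{i=1}^n g_i(a) f_i(a) = 0$ in $K$. This contradicts $1 \neq 0$ in a field, so $V(I) = \emptyset$.

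A small remark settles the equivalence of the two notions of infeasibility. Since $F \subseteq I$, we have $V(I) \subseteq V(F)$. Conversely, any common zero of $F$ is a zero of every element $\sum p_i f_i$ of $I$, so $V(F) = V(I)$. Hence the conclusion also says that $F$ is infeasible in the sense defined earlier.

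There is no real obstacle here. The only point worth stating explicitly is that we use the easy implication ($I$ contains $1$ implies $V(I) = \emptyset$), which holds over any field. The Weak Nullstellensatz, and with it algebraic closedness, is only needed for the converse, completeness-type direction.
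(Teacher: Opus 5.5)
Your proposal is correct and follows the same route as the paper: the Soundness theorem gives $1 \in I$, so no point can be a common zero of $I$. You spell out the evaluation-homomorphism step and the identity $V(F)=V(I)$, which the paper leaves implicit, and you correctly note that algebraic closedness is not needed for this direction.
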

	\begin{proof}
    If $1$ is derivable from $F$, by Theorem \ref{thm:soundness} it holds
    that $1 \in I$, meaning that there are no points at which the polynomials
    of $I$ are all simultaneously zero.
	\end{proof}	

	\begin{theorem}[Completeness]
		If $q \in I$, then $q$ is derivable from $F$.
	\end{theorem}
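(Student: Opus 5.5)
The completeness statement follows almost immediately from the definitions, so the plan is a direct construction rather than an induction. Assume $q \in I = \langle F \rangle$. By the definition of the ideal generated by $F$, there exist polynomials $g_1, \dots, g_n \in K[x_1, \dots, x_k]$ such that $q = \sum_{i=1}^n g_i f_i$. The goal is to exhibit a derivation of $q$ that uses only the Tautology and Ideal rules; the Substitution rule is not needed, as the paper already notes.

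The derivation has $n+1$ steps. First, for each $i = 1, \dots, n$, derive $f_i$ by the Tautology rule, which is allowed because $f_i \in F$. Then apply the Ideal rule once, with premises $p_1 = f_1, \dots, p_m = f_n$ (so $m = n$) and coefficients $g_1, \dots, g_n$. Its conclusion is $\sum_{i=1}^n g_i f_i = q$.

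The only points that need checking are formal ones. The Ideal rule allows arbitrary coefficients $g_i \in K[x_1, \dots, x_k]$, so the representation of $q$ can be used as it stands. If $F$ is empty, then $I = \{0\}$, and $q = 0$ is obtained from the Ideal rule with $m = 0$ (an empty sum); alternatively one can assume $n \ge 1$, since the paper writes $F = \{f_1, \dots, f_n\}$. I do not expect any real obstacle. The substantive work (for instance, the existence of certificates for infeasible systems) is already handled by the definition of $\langle F \rangle$ and, for the case $q = 1$, by the Weak Nullstellensatz. A corollary would be that whenever $V(I) = \emptyset$ over an algebraically closed $K$, the polynomial $1$ is derivable, which is the converse of Corollary~\ref{cor:soundness}.
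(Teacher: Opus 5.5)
Your proposal is correct and matches the paper's proof: write $q = \sum_{i=1}^n g_i f_i$, derive each $f_i$ by the Tautology rule, and finish with a single application of the Ideal rule using coefficients $g_1, \dots, g_n$. Your note on the empty-$F$ case is a harmless extra that the paper leaves implicit.
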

	\begin{proof}
		Since $q \in I$, there exist some $g_1, \dots, g_n$ such that $q =
		\sum_{i=1}^n g_i f_i$ which is clearly derivable by a combination of
		applications of the Tautology rule and a single application of the Ideal rule.
	\end{proof}

	\begin{corollary}
		If $K$ is an algebraically closed field and $V(I) = \emptyset$, then $1$ is derivable from $F$.
	\end{corollary}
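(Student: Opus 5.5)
The corollary follows by chaining the Weak Nullstellensatz with the Completeness theorem, in the same way Corollary~\ref{cor:soundness} was obtained from Theorem~\ref{thm:soundness}. The plan has two steps.

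First, assume $K$ is algebraically closed and $V(I) = \emptyset$. The Weak Nullstellensatz then gives $I = K[x_1, \dots, x_k]$. In particular the constant polynomial $1$ lies in $I$. Equivalently, there exist $g_1, \dots, g_n$ with $1 = \sum_{i=1}^n g_i f_i$, which is exactly a Nullstellensatz certificate.

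Second, apply the Completeness theorem with $q = 1$. Since $1 \in I$, the polynomial $1$ is derivable from $F$. Concretely, the derivation applies the Tautology rule to each $f_i$ and then applies the Ideal rule once with multipliers $g_1, \dots, g_n$.

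There is essentially no obstacle here. The only point to check is that the variety $V(I)$ of the ideal agrees with $V(F)$ of its generators, so that infeasibility of $F$ and of $I$ coincide. This holds because every element of $I$ is a combination $\sum p_i f_i$, which vanishes wherever all $f_i$ vanish. Algebraic closedness is needed only to invoke the Nullstellensatz; the remaining steps are field-independent.
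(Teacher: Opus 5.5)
Your proposal is correct and matches the paper's proof. Both invoke the Weak Nullstellensatz to obtain $1 = \sum_{i=1}^n g_i f_i$ (i.e.\ $1 \in I$) and then apply the Completeness theorem to conclude that $1$ is derivable from $F$; your side check that $V(I) = V(F)$ is correct but not needed, since the hypothesis is already stated for $V(I)$.
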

	\begin{proof}
		Since $K$ is algebraically closed and $V(I) = \emptyset$, there exists a
		Nullstellensatz certificate $1 = \sum_{i=1}^n g_i f_i$. Then $1$ is derivable
		due to the completeness of the proof system.
	\end{proof}

  A derivation of $1$ within the proof system shows that a Nullstellensatz
  certificate for the given set of polynomials exists. An explicit
  Nullstellensatz certificate can be reconstructed by flattening the derivation
  tree and applying the additions, multiplications and substitutions
  corresponding to the rules within the tree.

	\subsection{A polynomial encoding of $\operatorname{Aut}(G)$}

  We now present a reduction of the problem of deciding whether a given set of
  automorphisms of a graph generates the entire automorphism group, i.e.
  $\operatorname{Aut}(G) = \langle a_1, \dots, a_m \rangle$, to the decision
  problem of the infeasibility of a system of multivariate polynomial
  equations. We introduce a set of polynomials $\mathcal{A}(G, a_1, \dots,
  a_m)$ such that it has no zeros (i.e. the affine variety $V(\mathcal{A}(G,
  a_1, \dots, a_m))$ is empty) if and only if the equality
  $\operatorname{Aut}(G) = \langle a_1, \dots, a_m \rangle$ holds.

  Let $G$ be an undirected graph on $n$ vertices with adjacency matrix $A$, and
  let $a_1, \dots, a_m$ be some automorphisms of $G$. Let $X = (x_{i, j})_{0
  \leq i, j < n}$ be a matrix of indeterminates over the finite field $\mathbb{F}_p$,
  where $p > n$ is prime. We define $\mathcal{A}(G, a_1, \dots, a_m)$ as the
  union of the following families of polynomials over $\mathbb{F}_p$:
	\begin{itemize}
    \item $\operatorname{Bin}_{i, j} = x_{i, j}^2 - x_{i, j}$ for $0 \leq i, j
      < n$. The family $\operatorname{Bin}$ ensures that each element of $X$ is
      either 0 or 1.
    \item $\operatorname{RowNZ}_{i; j, l} = x_{i, j} x_{i, l}$ for $0
      \leq i, j, l < n$ and $j \neq l$. The family $\operatorname{RowNZ}$
      ensures that no row of $X$ has more than one non-zero element.
    \item $\operatorname{ColNZ}_{i, l; j} = x_{i, j} x_{l, j}$ for $0
      \leq i, j, l < n$ and $i \neq l$. The family $\operatorname{ColNZ}$
      ensures that no column of $X$ has more than one non-zero element.
    \item $\operatorname{RowS}_i = \sum_{j=0}^{n-1} x_{i, j} - 1$ for $0 \leq i < n$. The family
      $\operatorname{RowS}$ ensures that each row of $X$ sums to 1.
    \item $\operatorname{ColS}_j = \sum_{i=0}^{n-1} x_{i, j} - 1$ for $0 \leq j < n$. The family
      $\operatorname{ColS}$ ensures that each column of $X$ sums to 1.
    \item $\operatorname{Iso}_{i, j} = (AX - XA)_{i, j}$ for $0 \leq i, j < n$. The family
      $\operatorname{Iso}$ ensures that $X$ corresponds to an automorphism of
      $G$.
    \item $\operatorname{ExIso}_a = \prod_{u = 0}^{n-1} x_{u, a(u)}$ for $a \in \langle a_1, \dots, a_m \rangle$. The family
      $\operatorname{ExIso}$ ensures that $X$ does not correspond to any
      automorphism in $\langle a_1, \dots, a_m \rangle$.
	\end{itemize}

	\begin{example}
    Let $G$ be the graph given in Figure \ref{img:graphs}. The automorphisms of
    $G$ are $a_0 = id$ and $a_1 = (0\ 2)$. We may construct the following set
    of polynomials.

		\begin{figure}[H]
		\begin{center}
			\begin{tikzpicture}
  \begin{scope}
    \node (a1) at (0, 0)[draw, circle, label=180:$0$] {};
    \node (a2) at (0.5, 0.86)[draw, circle, label=90:$1$] {};
    \node (a3) at (1, 0)[draw, circle, label=0:$2$] {};

		\node (am) at (3, 0.5)[] {$\begin{bmatrix}0 & 1 & 0\\1 & 0 & 1\\0 & 1 & 0\end{bmatrix}$};
  \end{scope}
  \begin{scope}
    \path [-] (a1) edge (a2);
    \path [-] (a2) edge (a3);
  \end{scope}
\end{tikzpicture}
		\end{center}
		\caption{An example graph with its adjacency matrix.}
		\label{img:graphs}
		\end{figure}
		\begin{equation*}
      \begin{split}
        \operatorname{Bin} = \{ & x_{0, 0}^2 - x_{0, 0},\ x_{0, 1}^2 - x_{0, 1},\ x_{0, 2}^2 - x_{0, 2}, \\
              & x_{1, 0}^2 - x_{1, 0},\ x_{1, 1}^2 - x_{1, 1},\ x_{1, 2}^2 - x_{1, 2}, \\
              & x_{2, 0}^2 - x_{2, 0},\ x_{2, 1}^2 - x_{2, 1},\ x_{2, 2}^2 - x_{2, 2} \}\\
        \operatorname{RowNZ} =  \{ & x_{0,0}x_{0,1},\ x_{0,0}x_{0,2},\ x_{0,1}x_{0,2}, \\
              & x_{1,0}x_{1,1},\ x_{1,0}x_{1,2},\ x_{1,1}x_{1,2},\\
              & x_{2,0}x_{2,1},\ x_{2,0}x_{2,2},\ x_{2,1}x_{2,2} \}\\
        \operatorname{ColNZ} = \{ & x_{0,0}x_{1,0},\ x_{0,0}x_{2,0},\ x_{1,0}x_{2,0},\\
              & x_{0,1}x_{1,1},\ x_{0,1}x_{2,1},\ x_{1,1}x_{2,1},\\
              & x_{0,2}x_{1,2},\ x_{0,2}x_{2,2},\ x_{1,2}x_{2,2} \} \\
        \operatorname{RowS} = \{ & x_{0,0}+x_{0,1}+x_{0,2}-1,\ x_{1,0}+x_{1,1}+x_{1,2}-1,\ x_{2,0}+x_{2,1}+x_{2,2}-1 \} \\
        \operatorname{ColS} = \{ & x_{0,0}+x_{1,0}+x_{2,0}-1,\ x_{0,1}+x_{1,1}+x_{2,1}-1,\ x_{0,2}+x_{1,2}+x_{2,2}-1 \} \\
        \operatorname{Iso} = \{ & x_{1,0} - x_{0,1},\ x_{1, 1} - x_{0, 0} - x_{0, 2},\ x_{1, 2} - x_{0, 1},\\
          & x_{0, 0} + x_{2, 0} - x_{1, 1},\ x_{0, 1} + x_{2, 1} - x_{1, 0} - x_{1, 2},\ x_{0, 2} + x_{2, 2} - x_{1, 1},\\
          & x_{1,0} - x_{2,1},\ x_{1, 1} - x_{2, 0} - x_{2, 2},\ x_{1, 2} - x_{2, 1} \} \\
        \operatorname{ExIso} = \{ & x_{0,0}x_{1,1}x_{2,2},\ x_{0,2}x_{1,1}x_{2,0} \}
      \end{split}
		\end{equation*}
	\end{example}

	\begin{theorem}
    \label{thm:system_generators}
    $\mathcal{A}(G, a_1, \dots, a_m)$ is infeasible if and only if
    $\operatorname{Aut}(G) = \langle a_1, \dots, a_m \rangle$.
	\end{theorem}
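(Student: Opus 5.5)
The plan is to show that the zeros of $\mathcal{A}(G, a_1, \dots, a_m)$ are exactly the permutation matrices of automorphisms of $G$ lying outside $\langle a_1, \dots, a_m \rangle$. The theorem then follows at once. If $\operatorname{Aut}(G) = \langle a_1, \dots, a_m \rangle$, this set is empty. Otherwise any $a \in \operatorname{Aut}(G) \setminus \langle a_1, \dots, a_m \rangle$ yields a zero.

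I will work with points over $\mathbb{F}_p$, as in the definition of $V$. The same argument works verbatim over any extension field, in particular the algebraic closure, which is where the Weak Nullstellensatz is applied later; this is because the first step forces coordinates into $\{0,1\}$ regardless of the field.

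For the first direction, let $X$ be a zero of all the polynomials.
\begin{enumerate}
\item From $\operatorname{Bin}$, every $x_{i,j} \in \{0,1\}$, since $t^2 - t$ has only the roots $0, 1$ in any field.
\item From $\operatorname{RowNZ}$, each row has at most one entry equal to $1$.
\item From $\operatorname{RowS}$, the number of ones in a row, an integer between $0$ and $n$, is congruent to $1$ modulo $p$. Since $p > n$, it equals exactly $1$. The same argument with $\operatorname{ColNZ}$ and $\operatorname{ColS}$ handles columns.
\item Hence $X$ is the permutation matrix of a bijection $\sigma$ with $x_{u,\sigma(u)} = 1$.
\end{enumerate}

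Next comes the step needing the most care: reading $\operatorname{Iso}$ as an integer identity. Both $(AX)_{i,j}$ and $(XA)_{i,j}$ are sums of $0/1$ products, so they are integers in $[0,n]$. Their congruence modulo $p > n$ therefore forces equality over $\mathbb{Z}$, so $AX = XA$ as integer matrices. I will compute $(AX)_{i,\sigma(j)} = A_{i,j}$ and $(XA)_{i,\sigma(j)} = A_{\sigma(i),\sigma(j)}$, after fixing the orientation convention so that it matches the paper's $\operatorname{ExIso}_a$ indexing $x_{u,a(u)}$. This gives $A_{\sigma(i),\sigma(j)} = A_{i,j}$, i.e.\ $\sigma \in \operatorname{Aut}(G)$. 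Getting the indices to line up is routine but is the place an error could creep in. For example, the convention $x_{u,\sigma(u)}=1$ gives $X = P_\sigma^{T}$-type matrices, and commuting with $A$ is symmetric under transposition because $A$ is symmetric.

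Finally, for each $a \in \langle a_1, \dots, a_m \rangle$, $\operatorname{ExIso}_a(X) = \prod_u x_{u,a(u)}$ equals $1$ exactly when $a = \sigma$, and $0$ otherwise. So vanishing of all of $\operatorname{ExIso}$ means $\sigma \notin \langle a_1, \dots, a_m \rangle$. Conversely, given $\sigma \in \operatorname{Aut}(G) \setminus \langle a_1, \dots, a_m \rangle$, its permutation matrix satisfies every family by the same computations read backwards. This establishes the characterization of the zero set and thus the theorem.
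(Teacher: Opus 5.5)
Your proof is correct and follows the paper's route: both directions go through characterizing the zeros of $\mathcal{A}(G, a_1, \dots, a_m)$ as permutation matrices of automorphisms outside $\langle a_1, \dots, a_m \rangle$, and your explicit index computation stands in for the paper's citation of Sachs for $AX = XA$. Your two appeals to $p > n$ are harmless but unnecessary, which matches the paper's remark that the theorem does not depend on $p$: after $\operatorname{Bin}$ and $\operatorname{RowNZ}$ each row contains zero or one $1$, and $0 \neq 1$ in any field; and for a permutation matrix $X$ each entry of $AX$ and of $XA$ is a single entry of $A$.
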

	\begin{proof}
    We prove the contraposition of both implications, i.e. that an $X \in
    V(\mathcal{A}(G, a_1, \dots, a_m))$ exists if and only if there is an
    automorphism $a \in \operatorname{Aut}(G) \setminus \langle a_1, \dots, a_m
    \rangle$.
		\begin{itemize}
      \item[($\Rightarrow$)] Assume $X$ is a zero of $\mathcal{A}(G, a_1,
        \dots, a_m)$. Since the polynomials in $\operatorname{RowNZ}$ are zero,
        each row of $X$ must have at most one non-zero element. Similarly, each
        column of $X$ must have at most one non-zero element. Since the
        polynomials in $\operatorname{RowS}$ are zero, each row of $X$ must sum
        up to 1. This implies that exactly one element of the row is 1, and the
        rest must be 0. Similarly, each column of $X$ must have exactly one
        element equal to 1, and the rest must be 0. This means that $X$ is a
        permutation matrix.

        Since the polynomials in $\operatorname{Iso}$ are zero, and since
        $\operatorname{Iso}_{i, j}=(AX-XA)_{i, j}$, this means that $AX - XA = 0$.
        A well-known implication of this equality is that $X$ corresponds to an
        automorphism of $G$ \cite{Sachs1983}.

        Finally, since the polynomials in $\operatorname{ExIso}$ are zero, $X$
        cannot be the permutation matrix of any automorphism in $\langle a_1,
        \dots, a_m \rangle$. Let $a$ be one such automorphism and let
        $Y=(y_{i,j})_{0 \leq i,j < n}$ be its corresponding permutation matrix,
        i.e. $y_{i,a(i)} = 1$ for $0 \leq i < n$, and $y_{i, j} = 0$ otherwise.
        Since $\operatorname{ExIso}_a$ is zero, it means that at least one of
        $x_{u, a(u)}$ is zero, meaning $X \neq Y$.

      \item[($\Leftarrow$)] Assume there is an automorphism $a \in
        \operatorname{Aut}(G) \setminus \langle a_1, \dots, a_m \rangle$ and
        let $X$ be the permutation matrix corresponding to $a$. As $X$ is a
        permutation matrix, it is a zero of the polynomials of $\operatorname{Bin}$,
        $\operatorname{RowNZ}$, $\operatorname{ColNZ}$, $\operatorname{RowS}$
        and $\operatorname{ColS}$. As $a$ is an automorphism of $G$, $X$ is a
        zero of the polynomials of $\operatorname{Iso}$. Finally, as $a \notin
        \langle a_1, \dots, a_m \rangle$, $X$ is a zero of the polynomials of
        $\operatorname{ExIso}$. Therefore, $X \in V(\mathcal{A}(G, a_1, \dots,
        a_m))$.
		\end{itemize}
	\end{proof}

  The choice of field $\mathbb{F}_p$ where $p > n$ is such that the
  coefficients of the polynomials derived from $\mathcal{A}(G, a_1, \dots,
  a_m)$ may carry the cardinalities of subsets of vertices of $G$. We rely on
  this fact when constructing a certificate of infeasibility in Section
  \ref{sec:construction}. Note that Theorem \ref{thm:system_generators} does
  not depend on this choice of $p$.

  Note that $\mathcal{A}(G, a_1, \dots, a_m)$ contains redundant polynomials as
  defined. For example, excluding the $\operatorname{Bin}$ family or one of the
  $\operatorname{RowNZ}$ and $\operatorname{ColNZ}$ families does not change
  the zeros of the system. We include them regardless in order to simplify the
  construction of certificates.

  Another thing to take note of is the cardinality of $\operatorname{ExIso}$.
  Assuming $a_1, \dots, a_m$ generate the entirety of $\operatorname{Aut}(G)$,
  it would mean that $|\operatorname{ExIso}|=|\operatorname{Aut}(G)|$ which
  could be extremely large. In practice, however, this will not be an issue,
  since there will never be a need to instantiate all polynomials in
  $\mathcal{A}(G, a_1, \dots, a_m)$, but only the ones needed to complete a
  proof within the proof system.

	\begin{theorem}
    \label{thm:closed}
    Let $a \in \langle a_1, \dots, a_m \rangle$. Then $\mathcal{A}(G, a_1,
    \dots, a_m)$ is closed under the substitution $s$ given by $s(x_{i, j}) =
    x_{a(i), j}$.
	\end{theorem}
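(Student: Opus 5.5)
The plan is to check directly that every family making up $\mathcal{A}(G, a_1, \dots, a_m)$ is mapped by $s$ into $\mathcal{A}(G, a_1, \dots, a_m)$. Since $s$ acts on a polynomial by renaming variables, it suffices to compute $f^s$ for each generator $f$ and identify it as another generator, family by family. Two facts about $a$ will be used throughout. First, $a$ is a bijection on $\{0,\dots,n-1\}$. Second, $a \in \operatorname{Aut}(G)$, because each $a_i$ is an automorphism and $\operatorname{Aut}(G)$ is a group; consequently $A_{a(i),a(k)} = A_{i,k}$ for all $i,k$. Intuitively, $s$ permutes the rows of $X$ by $a$.

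The combinatorial families are routine.
\begin{itemize}
\item $\operatorname{Bin}_{i,j}^s = \operatorname{Bin}_{a(i),j}$.
\item $\operatorname{RowNZ}_{i;j,l}^s = x_{a(i),j}x_{a(i),l} = \operatorname{RowNZ}_{a(i);j,l}$, and the condition $j \neq l$ is untouched.
\item $\operatorname{ColNZ}_{i,l;j}^s = x_{a(i),j}x_{a(l),j} = \operatorname{ColNZ}_{a(i),a(l);j}$. Injectivity of $a$ guarantees $a(i) \neq a(l)$, so this is again a member of the family.
\item $\operatorname{RowS}_i^s = \operatorname{RowS}_{a(i)}$.
\item $\operatorname{ColS}_j^s = \sum_i x_{a(i),j} - 1 = \operatorname{ColS}_j$, by reindexing the sum via the bijection $a$.
\end{itemize}

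For $\operatorname{ExIso}_b$ with $b \in \langle a_1,\dots,a_m\rangle$, I would compute
\[
\operatorname{ExIso}_b^s = \prod_{u} x_{a(u),b(u)} = \prod_{v} x_{v,\, b(a^{-1}(v))} = \operatorname{ExIso}_{b a^{-1}},
\]
substituting $v = a(u)$. Since $a^{-1}$ and $b$ both lie in the group $\langle a_1,\dots,a_m\rangle$, so does $b a^{-1}$, and the image is again in $\operatorname{ExIso}$.

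The step needing the most care is $\operatorname{Iso}$, since it is the only place where the graph structure enters. I write out
\[
\operatorname{Iso}_{i,j} = \sum_k A_{i,k}x_{k,j} - \sum_k x_{i,k}A_{k,j}.
\]
Applying $s$ gives $\sum_k A_{i,k}x_{a(k),j} - \sum_k x_{a(i),k}A_{k,j}$. The second sum is already the second sum of $\operatorname{Iso}_{a(i),j}$. In the first sum I reindex by $k' = a(k)$ and use $A_{i,a^{-1}(k')} = A_{a(i),k'}$, which is the automorphism property of $a$. This turns it into $\sum_{k'} A_{a(i),k'}x_{k',j}$, so $\operatorname{Iso}_{i,j}^s = \operatorname{Iso}_{a(i),j}$. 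Combining all families gives $\mathcal{A}(G,a_1,\dots,a_m)^s \subseteq \mathcal{A}(G,a_1,\dots,a_m)$.
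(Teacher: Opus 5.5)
Your proof is correct and follows the paper's argument: a family-by-family check, with $\operatorname{ExIso}_b$ sent to $\operatorname{ExIso}_{b a^{-1}}$ exactly as in the paper. The only difference is cosmetic: you handle $\operatorname{Iso}$ entrywise, reindexing with $A_{i,a^{-1}(k')} = A_{a(i),k'}$, whereas the paper writes $X^s = P_aX$ and uses $AP_a = P_aA$ to get $(AX-XA)^s = P_a(AX-XA)$, which is the same computation in matrix form.
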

	\begin{proof}
    We show that each type of polynomial in $\mathcal{A}(G, a_1, \dots, a_m)$ is
    closed under the action of $s$.
		\begin{itemize}
      \item $\operatorname{Bin}_{i,j}^s = (x_{i,j}^2 - x_{i,j})^s =
        x_{a(i),j}^2 - x_{a(i),j} = \operatorname{Bin}_{a(i), j} \in
        \operatorname{Bin}$
      \item $\operatorname{RowNZ}_{i;j,l}^s = (x_{i,j} x_{i,l})^s = x_{a(i), j}
        x_{a(i), l} = \operatorname{RowNZ}_{a(i); j,l} \in
        \operatorname{RowNZ}$
      \item $\operatorname{ColNZ}_{i,l;j}^s = (x_{i,j} x_{l,j})^s = x_{a(i), j}
        x_{a(l), j} = \operatorname{ColNZ}_{a(i), a(l); j} \in
        \operatorname{ColNZ}$
      \item $\operatorname{RowS}_i^s = (\sum_{j=0}^{n-1} x_{i, j} - 1)^s =
        \sum_{j=0}^{n-1} x_{a(i), j} - 1 = \operatorname{RowS}_{a(i)} \in
        \operatorname{RowS}$
      \item $\operatorname{ColS}_j^s = (\sum_{i=0}^{n-1} x_{i, j} - 1)^s =
        \sum_{i=0}^{n-1} x_{a(i), j} - 1 = \operatorname{ColS}_{j} \in
        \operatorname{ColS}$
      \item Let $P_a$ be the permutation matrix of $a$. Note that $X^s = P_aX$,
        as well as $AP_a = P_aA$ since $a$ is an automorphism of $G$. It holds
        that
        \begin{equation*}
          \begin{split}
            (AX-XA)^s & = (AX^s-X^sA) \\
            & = A(P_aX) - (P_aX)A \\
            & = P_aAX - P_aXA \\
            & = P_a(AX-XA)
          \end{split}
        \end{equation*}
        meaning 
        \begin{equation*}
          \begin{split}
            \operatorname{Iso}_{i, j}^s & = (AX-XA)_{i, j}^s \\
            & = (P_a(AX-XA))_{i, j} \\
            & = (AX-XA)_{a(i), j} \\
            & = \operatorname{Iso}_{a(i), j} \in \operatorname{Iso}
          \end{split}
        \end{equation*}
      \item Let $b \in \langle a_1, \dots, a_m \rangle$. Then
        $\operatorname{ExIso}_b^s = (\prod_{u=0}^{n-1} x_{u, b(u)})^s =
        \prod_{u=0}^{n-1} x_{a(u), b(u)}$. Since both $a$ and $b$ are generated
        by $\{a_1, \dots, a_m\}$, that is true of $c = b \circ a^{-1}$ as well.
        Continuing the sequence of equalities, we have $\dots =
        \prod_{u=0}^{n-1} x_{a(u), c(a(u))} = \prod_{t=0}^{n-1} x_{t, c(t)} =
        \operatorname{ExIso}_{c} \in \operatorname{ExIso}$ for $t = a(u)$.
		\end{itemize}
	\end{proof}

  For the rest of the paper we will use the shorthand
  notation $p^a = p^s$ for $s$ defined as in the
  statement of Theorem \ref{thm:closed}.

  \section{Constructing the certificate of infeasibility}
  \label{sec:construction}

  In this section we describe the algorithm for determining a generating set
  of $\operatorname{Aut}(G)$ given a graph $G$. We then describe the
  construction of a proof of the infeasibility of $\mathcal{A}(G, a_1, \dots,
  a_m)$ based on this algorithm.

	\subsection{Notation and definitions}

  A sequence of elements $v_1, \dots, v_k$ of set $V$ is denoted by $[v_1,
  \dots, v_k]$. The prefix of length $k$ of a sequence
  $\nu$ is denoted by $[\nu]_k$. The sequence obtained by appending $v \in
  V$ to the end of sequence $\nu$ is written as $[\nu, v]$.

  A \emph{partition} $P$ of set $V$ is a set of non-empty subsets of $V$
  (called \emph{cells}) such that any two cells are disjoint and the union of
  all cells equals $V$. Let $[v]_P$ denote the cell of partition $P$ containing
  $v$. We say that a partition
  $P$ is \emph{discrete} if every cell of $P$ contains exactly one element.

  Let $P$ be a partition of a finite set $V$. A \emph{partition mapping} is any
  mapping $\mu : P \to \mathcal{P}(V)$ such that the sets in $\mu(P)$ are
  pairwise disjoint. We may use the shorthand notation $\mu[u] = \mu([u]_P)$,
  and $\mu^{-1}[v]$ for $C \in P$ such that $v \in \mu(C)$, or $\emptyset$ if no
  such cell $C$ exists. A partition mapping $\mu$ is \emph{bijective} if $|C| =
  |\mu(C)|$ for every $C \in P$. Otherwise, the partition mapping is
  \emph{partial}. Note that $\mu$ being bijective implies that $\mu(P)$ is a
  partition of $V$, while the reverse implication need not hold. We define
  $\sigma_\mu$ to be the mapping $u \overset{\sigma_\mu}{\mapsto} \min \mu[u]$
  when $\mu$ is bijective. Note that if $P$ is discrete, then $\sigma_\mu$ is a
  permutation of $V$.
  
  \begin{example}
    Let $P = \{\{0, 2\}, \{1, 3, 4\}\}$ be a partition of set $V = \{0, 1, 2,
    3, 4\}$. Then $\mu = \begin{smallmatrix}\{0, 2\} \{1, 3, 4\} \\ \{0, 3\}
    \{1, 2, 4\}\end{smallmatrix}$ is a bijective partition mapping since $|\{0,
    2\}| = |\{0, 3\}|$ and $|\{1, 3, 4\}| = |\{1, 2, 4\}|$. We have $\mu[0] =
    \mu[2] = \{0, 3\}$ and $\mu[1] = \mu[3] = \mu[4] = \{1, 2, 4\}$, as well as
    $\mu^{-1}[0] = \mu^{-1}[3] = \{0, 2\}$ and $\mu^{-1}[1] = \mu^{-1}[2] =
    \mu^{-1}[4] = \{1, 3, 4\}$, and finally $\sigma_{\mu} : \begin{pmatrix} 0
    & 1 & 2 & 3 & 4 \\ 0 & 1 & 0 & 1 & 1 \end{pmatrix}$.
  \end{example}

  \begin{example}
    Let $P = \{\{0, 2\}, \{1, 3, 4\}\}$ be a partition of set $V = \{0, 1, 2,
    3, 4\}$. Then both $\mu_1 = \begin{smallmatrix} \{0, 2\} \{1, 3, 4\} \\
      \{\} \{1, 2\} \end{smallmatrix}$ and $\mu_2 = \begin{smallmatrix} \{0,
    2\} \{1, 3, 4\} \\ \{0, 1, 3\} \{2, 4\} \end{smallmatrix}$ are partial
    partition mappings.
  \end{example}

  The number of vertices belonging to set $W$ adjacent to $u$ in graph $G$ is
  the \emph{degree of $u$ with respect to $W$}, denoted by
  $\operatorname{deg}(u, W)$. The
  set of neighbors of vertex $u$ in graph $G$ is denoted by $n_G(u)$. We may
  define a family of equivalence relations on the vertices of $G$ by $u \sim_W
  v$ if and only if $\operatorname{deg}(u, W) = \operatorname{deg}(v, W)$.

	\subsection{The algorithm for automorphism discovery}

  In this section we will present the details of the algorithm for discovering
  a set of generators of the automorphism group of a given graph. The algorithm
  is inspired by the algorithm described by McKay and Piperno \cite{McKayPiperno}.

  The algorithm constructs a search tree $\mathcal{T}(G)$ of graph $G$. The nodes
  of the search tree are represented by sequences $\nu = [u_1, \dots, u_k]$ of
  distinct vertices of $G$, with the root of the tree being represented by the
  empty sequence $[]$. The subtree of the search tree with root $\nu$ is
  denoted by $\mathcal{T}(G, \nu)$.

  Each node of the tree is assigned a partition mapping $P(G, \nu)$. The
  partition mapping is bijective for all inner nodes of the tree, and it may
  only be partial at its leaves. Let $T(G, \nu)$ be a function returning any
  non-singleton cell of the domain partition of $P(G, \nu)$ if $P(G, \nu)$ is
  bijective, and $\emptyset$ if it is partial or if no such cell exists. The
  search tree is defined such that children of node $\nu$ are nodes $[\nu,
  u_1], \dots, [\nu, u_k]$ for each vertex $u_i \in T(G, \nu)$. In order to
  define the partition mapping $P(G, \nu)$, we first define two operations,
  \emph{individualization} and \emph{refinement}.

  Let $\mu : \mathbb{U} \to \mathbb{V}$ be a bijective partition mapping and
  let $u \in U \in \mathbb{U}$ and $v = \sigma_\mu(u)$. We define the
  individualization operation as $i(\mu, u) = \mu'$ such that $\mu'(\{u\}) =
  \{v\}$, $\mu'(U \setminus \{u\}) = \mu(U) \setminus \{v\}$ and $\mu'(C) =
  \mu(C)$ otherwise.

  \begin{example}
  Let $\mu = \begin{smallmatrix}\{0, 2\} \{1, 3, 4\} \\ \{0, 3\} \{1, 2, 4\}
    \end{smallmatrix}$ be a partition mapping. Then $i(\mu, 3) =
    \begin{smallmatrix} \{0, 2\} \{3\} \{1, 4\} \\ \{0, 3\} \{1\} \{2, 4\}
    \end{smallmatrix}$ since $\mu[3] = \{1, 2, 4\}$ and
    $\min\mu[3] = 1$.
  \end{example}

  Let $U$ be the domain partition of a partition mapping $\mu$, and let $U'$ be
  the partition obtained by replacing each cell $C$ of $U$ with the cells $C /
  {\sim}_W$ for some cell $W$ of $U$. Define the \emph{split} operation $s(G,
  \mu, W)$ as the partition mapping $\mu'$ with domain partition $U'$. Let $C
  \in U$ and $C' \in C / {\sim}_W$. Define $\mu'$ such that $\mu'(C') = \{v \in
  \mu(C) \mid \operatorname{deg}(u, W) = \operatorname{deg}(v, \mu(W)) \text{
    for all } u \in C'\}$.

  \begin{example}
    Let $G$ be the graph from Figure \ref{img:refinement} and let  $\mu =
    \begin{smallmatrix}\{b\} \{a, c, d\} \\ \{d\} \{a, b, c\}
    \end{smallmatrix}$ be a partition mapping on its vertices. Let $W=\{a, c,
    d\}$ (therefore $\mu(W)=\{a,b,c\}$) be the cell with respect to which the
    split operation is being performed. We have the domain partition being
    split into $\{b\}\{a,c\}\{d\}$ since $\operatorname{deg}(a, W) = 1$,
    $\operatorname{deg}(c, W) = 1$ and $\operatorname{deg}(d, W) = 2$. Since
    $\operatorname{deg}(a, \mu(W)) = 1$, $\operatorname{deg}(b, \mu(W)) = 2$
    and $\operatorname{deg}(c, \mu(W)) = 1$ it means that $\mu'(\{a,c\}) =
    \{a,c\}$ and $\mu'(\{d\}) = \{b\}$, i.e. $s(G, \mu, W) = \mu' =
    \begin{smallmatrix}\{b\} \{a, c\} \{d\} \\ \{d\} \{a, c\} \{b\}
    \end{smallmatrix}$.

		\begin{figure}[H]
		\begin{center}
			\begin{tikzpicture}
  \begin{scope}[transform shape, scale=0.85]
    \node (a) at (-1, 1) [fill, circle, label=180:$a$] {};
    \node (b) at (0, 2) [fill, circle, label=0:$b$] {};
    \node (c) at (1, 1) [fill, circle, label=0:$c$] {};
    \node (d) at (0, 0) [fill, circle, label=0:$d$] {};

    \path [-] (a) edge (b);
    \path [-] (b) edge (c);
    \path [-] (c) edge (d);
    \path [-] (d) edge (a);
    \path [-] (b) edge (d);
  \end{scope}
\end{tikzpicture}
		\end{center}
    \caption{An example graph.}
		\label{img:refinement}
		\end{figure}
  \end{example}

  We define the \emph{refinement} operation $r(G, \mu)$ as the repeated,
  exhaustive application of $s$, where at each step $W$ is a cell of the domain
  partition of the partition mapping at that step. We do not specify the order
  in which these operations are performed, and we assume that refinement halts as
  soon as the resulting partition mapping is partial. We define $P(G, \nu)$ as
  follows:
  \begin{itemize}
    \item $P(G, []) = r(G, \mu)$ where $\mu(V)=V$
    \item $P(G, [\nu, u]) = r(G, i(P(G, \nu), u))$
  \end{itemize}

  \begin{example}
    Figure \ref{img:refinement_partition} shows the search tree with the partition mappings
    associated with each node.

		\begin{figure}[H]
		\begin{center}
			\begin{tikzpicture}
  \begin{scope}[transform shape, scale=0.9]
    \begin{scope}[shift={(0,1.5)}, transform shape, scale=0.75]
      \node (a) at (-1, 1) [fill, circle, label=180:$a$] {};
      \node (b) at (0, 2) [fill, circle, label=0:$b$] {};
      \node (c) at (1, 1) [fill, circle, label=0:$c$] {};
      \node (d) at (0, 0) [fill, circle, label=0:$d$] {};

      \path [-] (a) edge (b);
      \path [-] (b) edge (c);
      \path [-] (c) edge (d);
      \path [-] (d) edge (a);
      \path [-] (b) edge (d);
    \end{scope}
    \node (0a) at (0, 1) [] {$\begin{smallmatrix}\{a,b,c,d\} \\ \{a,b,c,d\}\end{smallmatrix}$};
    \node (0b) at (0, 0) [] {$\begin{smallmatrix}\{a,c\}\{b,d\} \\ \{a,c\}\{b,d\}\end{smallmatrix}$};

    \node (1a) at (-3, -1) [] {$\begin{smallmatrix}\{a\}\{c\}\{b,d\} \\ \{a\}\{c\}\{b,d\}\end{smallmatrix}$};
    \node (1b) at (-3, -2) [] {$\begin{smallmatrix}\{a\}\{c\}\{b,d\} \\ \{a\}\{c\}\{b,d\}\end{smallmatrix}$};
    \node (2a) at (3, -1) [] {$\begin{smallmatrix}\{c\}\{a\}\{b,d\} \\ \{a\}\{c\}\{b,d\}\end{smallmatrix}$};
    \node (2b) at (3, -2) [] {$\begin{smallmatrix}\{c\}\{a\}\{b,d\} \\ \{a\}\{c\}\{b,d\}\end{smallmatrix}$};

    \node (3a) at (-4.5, -3) [] {$\begin{smallmatrix}\{a\}\{c\}\{b\}\{d\} \\ \{a\}\{c\}\{b\}\{d\}\end{smallmatrix}$};
    \node (3b) at (-4.5, -4) [] {$\begin{smallmatrix}\{a\}\{c\}\{b\}\{d\} \\ \{a\}\{c\}\{b\}\{d\}\end{smallmatrix}$};

    \node (4a) at (-1.5, -3) [] {$\begin{smallmatrix}\{a\}\{c\}\{d\}\{b\} \\ \{a\}\{c\}\{b\}\{d\}\end{smallmatrix}$};
    \node (4b) at (-1.5, -4) [] {$\begin{smallmatrix}\{a\}\{c\}\{d\}\{b\} \\ \{a\}\{c\}\{b\}\{d\}\end{smallmatrix}$};

    \node (5a) at (1.5, -3) [] {$\begin{smallmatrix}\{c\}\{a\}\{b\}\{d\} \\ \{a\}\{c\}\{b\}\{d\}\end{smallmatrix}$};
    \node (5b) at (1.5, -4) [] {$\begin{smallmatrix}\{c\}\{a\}\{b\}\{d\} \\ \{a\}\{c\}\{b\}\{d\}\end{smallmatrix}$};

    \node (6a) at (4.5, -3) [] {$\begin{smallmatrix}\{c\}\{a\}\{d\}\{b\} \\ \{a\}\{c\}\{b\}\{d\}\end{smallmatrix}$};
    \node (6b) at (4.5, -4) [] {$\begin{smallmatrix}\{c\}\{a\}\{d\}\{b\} \\ \{a\}\{c\}\{b\}\{d\}\end{smallmatrix}$};

		\path [->] (0a) edge node[label=0:$r$]{} (0b);
    \path [-] (0b) edge (1a);
    \path [-] (0b) edge (2a);
    \path [->] (1a) edge node[label=0:$r$]{} (1b);
    \path [->] (2a) edge node[label=0:$r$]{} (2b);
    \path [-] (1b) edge (3a);
    \path [-] (1b) edge (4a);
    \path [-] (2b) edge (5a);
    \path [-] (2b) edge (6a);
    \path [->] (3a) edge node[label=0:$r$]{} (3b);
    \path [->] (4a) edge node[label=0:$r$]{} (4b);
    \path [->] (5a) edge node[label=0:$r$]{} (5b);
    \path [->] (6a) edge node[label=0:$r$]{} (6b);
  \end{scope}
\end{tikzpicture}
		\end{center}
    \caption{Search tree with partition mappings.}
		\label{img:refinement_partition}
		\end{figure}
  \end{example}

  \begin{example}
    Figure \ref{img:partial} shows a part of the search tree for the given
    graph. The left branch shows the steps for
    computing $r$ where the resulting partition mapping is bijective,
    while the right branch shows the steps where the resulting partition
    mapping is partial.

		\begin{figure}[H]
		\begin{center}
			\begin{tikzpicture}
  \begin{scope}[transform shape, scale=0.9]
    \begin{scope}[shift={(0,1.7)}, transform shape, scale=0.75]
      \node (a) at (-2, 1) [fill, circle, label=180:$a$] {};
      \node (b) at (-1, 2) [fill, circle, label=0:$b$] {};
      \node (c) at (0, 1) [fill, circle, label=0:$c$] {};
      \node (d) at (-1, 0) [fill, circle, label=0:$d$] {};

      \node (e) at (1, 2) [fill, circle, label=180:$e$] {};
      \node (f) at (1, 0) [fill, circle, label=180:$f$] {};
      \node (g) at (2, 1) [fill, circle, label=0:$g$] {};

      \path [-] (a) edge (b);
      \path [-] (b) edge (c);
      \path [-] (c) edge (d);
      \path [-] (d) edge (a);

      \path [-] (e) edge (f);
      \path [-] (f) edge (g);
      \path [-] (g) edge (e);
    \end{scope}

    \node (0a) at (0, 1) [] {$\begin{smallmatrix}\{a,b,c,d,e,f,g\} \\ \{a,b,c,d,e,f,g\}\end{smallmatrix}$};
    \node (0b) at (0, 0) [] {$\begin{smallmatrix}\{a,b,c,d,e,f,g\} \\ \{a,b,c,d,e,f,g\}\end{smallmatrix}$};

    \node (1a) at (-3, -1) [] {$\begin{smallmatrix}\{a\}\{b,c,d,e,f,g\} \\ \{a\}\{b,c,d,e,f,g\}\end{smallmatrix}$};
    \node (1b) at (-3, -2.5) [] {$\begin{smallmatrix}\{a\}\{c,e,f,g\}\{b,d\} \\ \{a\}\{c,e,f,g\}\{b,d\}\end{smallmatrix}$};
    \node (1c) at (-3, -4) [] {$\begin{smallmatrix}\{a\}\{e,f,g\}\{c\}\{b,d\} \\ \{a\}\{e,f,g\}\{c\}\{b,d\}\end{smallmatrix}$};
    \node (dots1) at (-1, -1) [] {$\dots$};

    \node (2a) at (1, -1) [] {$\begin{smallmatrix}\{e\}\{a,b,c,d,f,g\} \\ \{a\}\{b,c,d,e,f,g\}\end{smallmatrix}$};
    \node (2b) at (1, -2.5) [] {$\begin{smallmatrix}\{e\}\{a,b,c,d\}\{f,g\} \\ \{a\}\{c,e,f,g\}\{b,d\}\end{smallmatrix}$};
    \node (2c) at (1, -4) [] {$\begin{smallmatrix}\{e\}\{a,b,c,d\}\{f,g\} \\ \{a\}\{e,f,g\}\{\}\ \ \end{smallmatrix}$};
    \node (dots2) at (3, -1) [] {$\dots$};

		\path [->] (0a) edge node[label=0:$r$]{} (0b);
    \path [-] (0b) edge (1a);
    \path [-] (0b) edge (2a);
    \path [-] (0b) edge (dots1);
    \path [-] (0b) edge (dots2);
    \path [->] (1a) edge (1b);
    \path [->] (1b) edge (1c);
    \path [->] (2a) edge (2b);
    \path [->] (2b) edge (2c);

    \node (label1a) at (-2.5, -1.75) [] {$\{a\}$};
    \node (label1b) at (-2.5, -3.25) [] {$\{b,d\}$};
    \node (label2a) at (1.5, -1.75) [] {$\{e\}$};
    \node (label2b) at (1.5, -3.25) [] {$\{f,g\}$};
  \end{scope}
\end{tikzpicture}
		\end{center}
    \caption{Part of the search tree with partition mappings.}
		\label{img:partial}
		\end{figure}
  \end{example}

  Note the simple observation that $\sigma_{P(G, \nu)}(u)$ is preserved for all
  individualized vertices $u$ when moving down the tree, as long as the
  partition mapping is bijective. We formalize this result in the following
  lemma.

  \begin{lemma}
    \label{lemma:sigma}
    Let $[\nu, u]$ be a node. For all $\nu' \in \mathcal{T}(G, [\nu, u])$ such
    that $P(G, \nu')$ is bijective, it holds that $\sigma_{P(G, \nu')} (u) =
    \sigma_{P(G, \nu)}(u)$.
  \end{lemma}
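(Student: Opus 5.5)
The plan is to track the cell $\{u\}$ through every elementary operation on the path from $\nu$ to $\nu'$. I will show that once $\{u\}$ becomes a singleton cell with singleton image $\{v\}$, where $v=\sigma_{P(G,\nu)}(u)$, every later operation either keeps the image equal to $\{v\}$ or makes the mapping partial.

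First, the base step. By definition $P(G,[\nu,u]) = r(G, i(P(G,\nu),u))$, and $P(G,\nu)$ is bijective because $\nu$ is an inner node. By the definition of individualization, $\mu_0 = i(P(G,\nu),u)$ satisfies $\mu_0(\{u\}) = \{v\}$ with $v = \sigma_{P(G,\nu)}(u)$. So $\{u\}$ is a cell of the domain partition of $\mu_0$, and its image is $\{v\}$.

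Next, the invariant. The partition mapping $P(G,\nu')$ is reached from $\mu_0$ by a finite sequence of split operations $s(G,\cdot,W)$, which make up the refinement steps. At each further node these are interleaved with individualizations $i(\cdot,w)$ for vertices $w$ chosen from $T(G,\cdot)$. I claim each intermediate mapping $\mu$ satisfies the following: $\{u\}$ is a cell of its domain and $\mu(\{u\}) \subseteq \{v\}$.
\begin{itemize}
\item For an individualization $i(\mu,w)$, the cell $U$ containing $w$ is a non-singleton cell returned by $T$. So $U \neq \{u\}$, and the definition gives $\mu'(\{u\}) = \mu(\{u\})$.
\item For a split $s(G,\mu,W)$, the singleton $\{u\}$ is split only into itself, since $\{u\}/{\sim}_W = \{\{u\}\}$. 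By definition its new image is a subset of $\mu(\{u\})$, namely those $v'\in\mu(\{u\})$ satisfying the degree condition.
\end{itemize}
Hence the image of $\{u\}$ can only shrink, and it remains a subset of $\{v\}$.

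Finally, the conclusion. If $P(G,\nu')$ is bijective, then $|\{u\}| = |P(G,\nu')(\{u\})|$. Since that image is contained in $\{v\}$, it equals $\{v\}$. Therefore $\sigma_{P(G,\nu')}(u) = \min P(G,\nu')[u] = v = \sigma_{P(G,\nu)}(u)$.

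The main subtlety is that intermediate mappings may not be bijective. This is harmless for the argument, because the inclusion $\mu(\{u\})\subseteq\{v\}$ is preserved regardless, and only the final bijectivity of $P(G,\nu')$ is used. The same tree conventions ensure that the path exists: inner nodes have bijective mappings, and refinement halts at partial mappings. So every node on the path from $[\nu,u]$ to $\nu'$ other than possibly $\nu'$ has a bijective mapping, and the individualizations along the path are well defined.
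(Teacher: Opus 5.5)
Your proof is correct and follows the same approach as the paper's: individualization sets the image of $\{u\}$ to $\{\sigma_{P(G,\nu)}(u)\}$, and no later step can change the image of a singleton cell without making the mapping partial. You also make explicit two points the paper leaves implicit: later individualizations act only on non-singleton cells, and splits can only shrink $\mu(\{u\})$, so bijectivity of $P(G,\nu')$ forces equality.
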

  \begin{proof}
    Refinement may not change the mapping of singleton cells if the resulting
    partition mapping is bijective. For any $\nu'$ which is a proper descendant
    of $[\nu, u]$ with a bijective partition mapping, clearly $\sigma_{P(G,
    \nu')}(u) = \sigma_{P(G, [\nu, u])}(u)$ must hold since $\{u\}$ is a
    singleton in $P(G, [\nu, u])$. Since, by definition, $\{u\}$ maps to
    $\{\sigma_{P(G, \nu)}(u)\}$ after individualization, we have $\sigma_{P(G,
    [\nu, u])}(u) = \sigma_{P(G, \nu)}(u)$. Therefore, $\sigma_{P(G, \nu')}(u)
    = \sigma_{P(G, \nu)}(u)$ for all $\nu' \in \mathcal{T}(G, [\nu, u])$ with a
    bijective partition mapping.
  \end{proof}

  The automorphisms are discovered at the leaves of the search tree whose
  partition mapping is bijective. The following lemma justifies this.

  \begin{lemma}
    \label{lemma:leaf_aut}
    Let $\nu$ be a leaf and $\mu = P(G, \nu)$ be its bijective partition
    mapping. Then $\sigma_\mu \in \operatorname{Aut}(G)$.
  \end{lemma}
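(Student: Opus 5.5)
Since $\nu$ is a leaf and $\mu$ is bijective, $T(G,\nu)=\emptyset$ means the domain partition $U$ of $\mu$ has no non-singleton cell, so $U$ is discrete. Because $\mu$ is bijective, every $\mu(C)$ is a singleton, and the sets $\mu(C)$ are pairwise disjoint, so $\mu(U)$ is also a discrete partition of $V$. Hence $\sigma_\mu$ is the permutation $u\mapsto v$ with $\mu(\{u\})=\{v\}$. It remains to show that $\sigma_\mu$ preserves adjacency.

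The key step is that $\mu$ is stable under refinement. The root partition mapping and every individualization step are followed by an exhaustive application of $r$. Since $\mu$ is bijective, refinement did not halt early because of partiality, so applying $s(G,\mu,W)$ for any cell $W\in U$ leaves $\mu$ unchanged. This needs a small check in the definition of $s$. Applying $s$ to $\mu$ with $W=\{w\}$ leaves each cell $\{u\}$ unsplit, and the new image of $\{u\}$ is $\{v\in\mu(\{u\}) \mid \operatorname{deg}(u,\{w\})=\operatorname{deg}(v,\{\sigma_\mu(w)\})\}$. Stability means this new image is still $\{\sigma_\mu(u)\}$; if it were empty, $\mu$ would have become partial. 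Hence for all $u,w$:
\[
\operatorname{deg}(u,\{w\})=\operatorname{deg}(\sigma_\mu(u),\{\sigma_\mu(w)\}).
\]
In words, $u\sim w$ if and only if $\sigma_\mu(u)\sim\sigma_\mu(w)$.

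Combined with bijectivity of $\sigma_\mu$, this shows $\sigma_\mu\in\operatorname{Aut}(G)$.

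The main obstacle is justifying rigorously that the final $\mu$ is a fixed point of every split, given that $r$ is only described as an "exhaustive application" with unspecified order. I would interpret exhaustiveness as termination exactly when no split $s(G,\mu,W)$ changes $\mu$, apart from the partial-halt case, which is excluded here. If an order-independent reading is needed, I would argue by contradiction. Suppose some split changed the final $\mu$. A split cannot further divide a discrete domain partition, so it could only shrink some image singleton to $\emptyset$. That would make $\mu$ partial, and refinement would then not have terminated with the bijective $\mu$ that we are given.
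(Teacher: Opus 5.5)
Your proposal is correct and follows essentially the same route as the paper. Both arguments use discreteness of the leaf's domain partition, the fact that exhaustive refinement makes $s(G,\mu,\{w\})=\mu$ for every $w$, and then read off $\operatorname{deg}(u,\{w\})=\operatorname{deg}(\sigma_\mu(u),\{\sigma_\mu(w)\})$ as adjacency preservation. Your extra checks only spell out what the paper asserts in one line: that $T(G,\nu)=\emptyset$ forces discreteness, and that a non-trivial split could only empty an image singleton, making $\mu$ partial.
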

  \begin{proof}
    Since $\nu$ is a leaf with a bijective partition mapping, the domain
    partition of $\mu$ is $\{ \{u\} \mid u \text{ is a vertex of } G \}$. Based
    on the definition of $r$, any additional application of $s$ to $\mu$ does
    not further refine it, i.e. $s(G, \mu, W) = \mu$ for any cell $W$ of the
    domain partition of $\mu$. This means that $\operatorname{deg}(u, \{v\}) =
    \operatorname{deg}(\sigma_\mu(u), \{\sigma_\mu(v)\})$ for any two vertices
    $u$ and $v$ in $G$. Therefore, $u$ and $v$ are adjacent in $G$ if and only
    if $\sigma_\mu(u)$ and $\sigma_\mu(v)$ are adjacent in $G$, meaning that
    $\sigma_\mu$ is an automorphism of $G$.
  \end{proof}

  Let $\nu^\sigma=[\sigma(u_1), \dots, \sigma(u_k)]$ denote the
  image of node $\nu=[u_1, \dots, u_k]$ under permutation
  $\sigma$. We introduce a pruning operation $\Pi([\nu, u],
  \sigma)$, allowing the subtree with root $[\nu, u']$ to be
  pruned if $u < u'$ and $[\nu, u]^\sigma = [\nu, u']$ for some
  automorphism $\sigma$ of $G$. This is essentially a pruning
  operation described by McKay and Piperno \cite{McKayPiperno},
  adapted to our construction. The operation is useful because
  the automorphisms of the pruned subtree do not expand the
  generated group, meaning the search can skip the entire
  subtree. The formal justification will come later, as Corollary
  \ref{cor:pruning}.

  \begin{example}
    Figure \ref{img:refinement_prune} shows the search tree from Figure
    \ref{img:refinement_partition} with the parts of the search tree which may
    be pruned being highlighted.

		\begin{figure}[H]
		\begin{center}
			\begin{tikzpicture}
  \begin{scope}[transform shape, scale=0.9]
    \node (0a) at (0, 1) [] {$\begin{smallmatrix}\{a,b,c,d\} \\ \{a,b,c,d\}\end{smallmatrix}$};
    \node (0b) at (0, 0) [] {$\begin{smallmatrix}\{a,c\}\{b,d\} \\ \{a,c\}\{b,d\}\end{smallmatrix}$};

    \node (1a) at (-3, -1) [] {$\begin{smallmatrix}\{a\}\{c\}\{b,d\} \\ \{a\}\{c\}\{b,d\}\end{smallmatrix}$};
    \node (1b) at (-3, -2) [] {$\begin{smallmatrix}\{a\}\{c\}\{b,d\} \\ \{a\}\{c\}\{b,d\}\end{smallmatrix}$};
    \node (2a) at (3, -1) [] {$\begin{smallmatrix}\{c\}\{a\}\{b,d\} \\ \{a\}\{c\}\{b,d\}\end{smallmatrix}$};
    \node (2b) at (3, -2) [] {$\begin{smallmatrix}\{c\}\{a\}\{b,d\} \\ \{a\}\{c\}\{b,d\}\end{smallmatrix}$};

    \node (3a) at (-4.5, -3) [] {$\begin{smallmatrix}\{a\}\{c\}\{b\}\{d\} \\ \{a\}\{c\}\{b\}\{d\}\end{smallmatrix}$};
    \node (3b) at (-4.5, -4) [] {$\begin{smallmatrix}\{a\}\{c\}\{b\}\{d\} \\ \{a\}\{c\}\{b\}\{d\}\end{smallmatrix}$};

    \node (4a) at (-1.5, -3) [] {$\begin{smallmatrix}\{a\}\{c\}\{d\}\{b\} \\ \{a\}\{c\}\{b\}\{d\}\end{smallmatrix}$};
    \node (4b) at (-1.5, -4) [] {$\begin{smallmatrix}\{a\}\{c\}\{d\}\{b\} \\ \{a\}\{c\}\{b\}\{d\}\end{smallmatrix}$};

    \node (5a) at (1.5, -3) [] {$\begin{smallmatrix}\{c\}\{a\}\{b\}\{d\} \\ \{a\}\{c\}\{b\}\{d\}\end{smallmatrix}$};
    \node (5b) at (1.5, -4) [] {$\begin{smallmatrix}\{c\}\{a\}\{b\}\{d\} \\ \{a\}\{c\}\{b\}\{d\}\end{smallmatrix}$};

    \node (6a) at (4.5, -3) [] {$\begin{smallmatrix}\{c\}\{a\}\{d\}\{b\} \\ \{a\}\{c\}\{b\}\{d\}\end{smallmatrix}$};
    \node (6b) at (4.5, -4) [] {$\begin{smallmatrix}\{c\}\{a\}\{d\}\{b\} \\ \{a\}\{c\}\{b\}\{d\}\end{smallmatrix}$};

		\path [->] (0a) edge node[label=0:$r$]{} (0b);
    \path [-] (0b) edge (1a);
    \path [-] (0b) edge (2a);
    \path [->] (1a) edge node[label=0:$r$]{} (1b);
    \path [->] (2a) edge node[label=0:$r$]{} (2b);
    \path [-] (1b) edge (3a);
    \path [-] (1b) edge (4a);
    \path [-] (2b) edge (5a);
    \path [-] (2b) edge (6a);
    \path [->] (3a) edge node[label=0:$r$]{} (3b);
    \path [->] (4a) edge node[label=0:$r$]{} (4b);
    \path [->] (5a) edge node[label=0:$r$]{} (5b);
    \path [->] (6a) edge node[label=0:$r$]{} (6b);

    \draw (-2.7, -2.6) rectangle (-0.3, -5);
    \draw (0.5, -0.69) rectangle (5.5, -5);
    \node (node) at (-4.5, -4.7) [] {$\nu = [a, b]$};
    \node (prune1) at (-1.5, -4.7) [] {$\Pi([a, b], (b\ d))$};
    \node (prune2) at (3, -4.7) [] {$\Pi([a], (a\ c))$};
  \end{scope}
\end{tikzpicture}
		\end{center}
    \caption{Search tree with pruned subtrees highlighted.}
		\label{img:refinement_prune}
		\end{figure}
  \end{example}

  The automorphisms used for pruning may be discovered during the execution of
  the algorithm. The following lemma describes how the automorphisms discovered
  at two leaves of the search tree may be used to prune the subtree containing
  one of the leaves.

  \begin{lemma}
    \label{lemma:pruning_discovery}
    Let $\nu_1 < \nu_2$ be two leaves of the search tree where automorphisms
    $\sigma_1$ and $\sigma_2$ are discovered, respectively. Let $\sigma =
    \sigma_2^{-1} \circ \sigma_1$ and let $l$ be the length of the longest
    common ancestor of $\nu_1$ and $\nu_2$. Then the subtree rooted at
    $[\nu_2]_{l+1}$ may be pruned using the pruning operation
    $\Pi([\nu_1]_{l+1}, \sigma)$.
  \end{lemma}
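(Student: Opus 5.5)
The plan is to check the three requirements of the pruning operation $\Pi([\nu_1]_{l+1}, \sigma)$ directly. Write $\nu_1 = [u_1, \dots]$ and $\nu_2 = [w_1, \dots]$, and let $\nu = [\nu_1]_l = [\nu_2]_l$ be their longest common ancestor. The three requirements are:
\begin{enumerate}
\item[(a)] $\sigma \in \operatorname{Aut}(G)$;
\item[(b)] $u_{l+1} < w_{l+1}$;
\item[(c)] $[\nu_1]_{l+1}^\sigma = [\nu_2]_{l+1}$, i.e.\ $\sigma(u_i) = u_i$ for $i \le l$ and $\sigma(u_{l+1}) = w_{l+1}$.
\end{enumerate}
Throughout, $\sigma = \sigma_2^{-1} \circ \sigma_1$ means that $\sigma_1$ is applied first.

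For (a), note that automorphisms are discovered only at leaves with bijective partition mappings. So Lemma \ref{lemma:leaf_aut} gives $\sigma_1, \sigma_2 \in \operatorname{Aut}(G)$, and therefore $\sigma$ is an automorphism as well. For (b), I read $\nu_1 < \nu_2$ as lexicographic order on sequences of vertices. Since neither leaf is a prefix of the other and they agree exactly on the first $l$ entries, we get $u_{l+1} \neq w_{l+1}$, and lexicographic order forces $u_{l+1} < w_{l+1}$.

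The core step is (c), which I will obtain from Lemma \ref{lemma:sigma}. For $i \le l+1$, apply that lemma to the node $[\nu_1]_i = [[\nu_1]_{i-1}, u_i]$ and its descendant leaf $\nu_1$, whose mapping is bijective. This gives $\sigma_1(u_i) = \sigma_{P(G, [\nu_1]_{i-1})}(u_i)$, and likewise $\sigma_2(w_i) = \sigma_{P(G, [\nu_2]_{i-1})}(w_i)$.
\begin{itemize}
\item For $i \le l$ we have $u_i = w_i$ and $[\nu_1]_{i-1} = [\nu_2]_{i-1}$, so $\sigma_1(u_i) = \sigma_2(u_i)$. Hence $\sigma(u_i) = u_i$.
\item For $i = l+1$ both ancestors equal $\nu$, so
\begin{equation*}
\sigma_1(u_{l+1}) = \min P(G,\nu)[u_{l+1}] \quad\text{and}\quad \sigma_2(w_{l+1}) = \min P(G,\nu)[w_{l+1}].
\end{equation*}
\end{itemize}

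The point needing care is that these two minima coincide. By the definition of the search tree, the children of $\nu$ are $[\nu, u]$ for $u \in T(G,\nu)$, which is a single cell of the domain partition of $P(G,\nu)$. Hence $u_{l+1}$ and $w_{l+1}$ lie in the same cell, so $P(G,\nu)[u_{l+1}] = P(G,\nu)[w_{l+1}]$. We also need $P(G,\nu)$ to be bijective, which holds because $\nu$ is an inner node. Therefore $\sigma_1(u_{l+1}) = \sigma_2(w_{l+1})$, and applying $\sigma_2^{-1}$ gives $\sigma(u_{l+1}) = w_{l+1}$.

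Combining (a)–(c), the subtree rooted at $[\nu, w_{l+1}] = [\nu_2]_{l+1}$ satisfies the conditions for pruning by $\Pi([\nu, u_{l+1}], \sigma) = \Pi([\nu_1]_{l+1}, \sigma)$.
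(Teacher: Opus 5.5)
Your proof is correct and follows the paper's route. Both arguments use Lemma~\ref{lemma:sigma} to show that $\sigma_1$ and $\sigma_2$ agree on the common prefix $\nu$. Both then use the fact that $u_{l+1}$ and $w_{l+1}$ lie in the same cell $T(G,\nu)$ of the bijective mapping $P(G,\nu)$ to get $\sigma_1(u_{l+1}) = \sigma_2(w_{l+1})$. You also verify explicitly that $\sigma \in \operatorname{Aut}(G)$ (via Lemma~\ref{lemma:leaf_aut}) and that $u_{l+1} < w_{l+1}$ under the lexicographic reading of $\nu_1 < \nu_2$; the paper leaves both of these conditions implicit.
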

  \begin{proof}
    Let $\nu = [\nu_1]_l = [\nu_2]_l$ and $\mu = P(G, \nu)$. By Lemma \ref{lemma:sigma},
    $\sigma_1(u) = \sigma_\mu(u) = \sigma_2(u)$ for any $u \in \nu$. Let $[\nu,
    u_1] = [\nu_1]_{l+1}$ and $[\nu, u_2] = [\nu_2]_{l+1}$. Since $u_1, u_2 \in
    T(G, \nu)$, we have $\sigma_1(u_1) = \sigma_2(u_2)$. Therefore $[\nu,
    u_1]^\sigma = [\nu, u_2]$.
  \end{proof}

  In practice, the first discovered leaf $\nu_0$ with a bijective mapping is
  always used in combination with a newly discovered leaf for such pruning.

  Automorphisms, if known in advance, may be provided before starting the
  traversal of the search tree, leading to more effective pruning. We rely on
  this fact for constructing the certificate. An initial execution of the
  algorithm is used to find the generators of the automorphism group.
  Afterwards, a second traversal is executed in order to construct the
  certificate based on the now pruned tree. This is done so that the
  certificate size is further optimized, as this is computationally the
  heaviest part of the algorithm.

	\subsection{Certificate construction}

  After an initial traversal of the tree, a set of automorphism group
  generators $\{a_1, \dots, a_m\}$ is obtained. The main idea for constructing
  the certificate is assigning a monomial of the form $M_\nu = \prod_{i=1}^k
  x_{u_i, v_i}$ to each node $\nu=[u_1, \dots, u_k]$, where $v_i = \sigma_{P(G,
  [\nu]_{i-1})}(u_i)$ for $1 \leq i \leq k$. If $M_\nu$ can be derived from
  $\mathcal{A}(G, a_1, \dots, a_m)$, this means that $M_\nu(X)$ vanishes
  whenever $\mathcal{A}(G, a_1, \dots, a_m)(X)$ vanishes. In that case, at
  least one of $x_{u_i, v_i}$ is zero, meaning that no permutation matrix of a
  permutation which maps $u_1$ to $v_1$, $u_2$ to $v_2$, \dots, $u_k$ to $v_k$
  is a zero of $\mathcal{A}(G, a_1, \dots, a_m)$. This shows that any
  automorphism found in the subtree of $\nu$ must be contained in $\langle a_1,
  \dots, a_m \rangle$.

	\begin{example}
    Figure \ref{img:monomials} shows the monomials associated with each node of
    the tree given in Figure \ref{img:refinement_partition}.

		\begin{figure}[H]
		\begin{center}
			\begin{tikzpicture}
  \begin{scope}
    \node (0) at (0, 0) [] {$1$};
    \node (1) at (-2, -1) [] {$x_{a, a}$};
    \node (2) at (2, -1) [] {$x_{c, a}$};
    \node (3) at (-3, -2) [] {$x_{a, a} x_{b, b}$};
    \node (4) at (-1, -2) [] {$x_{a, a} x_{d, b}$};
    \node (5) at (1, -2) [] {$x_{c, a} x_{b, b}$};
    \node (6) at (3, -2) [] {$x_{c, a} x_{d, b}$};
  \end{scope}
  \begin{scope}
    \path [-] (0) edge (1);
    \path [-] (0) edge (2);
    \path [-] (1) edge (3);
    \path [-] (1) edge (4);
    \path [-] (2) edge (5);
    \path [-] (2) edge (6);
  \end{scope}
\end{tikzpicture}
		\end{center}
		\caption{A search tree with its associated node monomials.}
		\label{img:monomials}
		\end{figure}
	\end{example}

  The construction of the certificate rests on the following lemma, which shows
  a clear connection between the partition mappings and the polynomials
  representing them. We present a proof sketch of the lemma here, while the
  full proof is given in Appendix \ref{app:proof}.

	\begin{lemma}
		\label{lemma:monorefine}
    Let $\nu$ be a node of the search tree with partition mapping $\mu$.
    Polynomials of the form $(-1 + \sum_{v \in \mu[u]} x_{u, v}) M_\nu$ may be
    derived from $\mathcal{A}(G, a_1, \dots, a_m)$ for each vertex $u$, $(-1 + \sum_{u \in \mu^{-1}[v]} x_{u, v})
    M_\nu$ may be derived for each vertex $v$, and $x_{u, v} M_\nu$ may be
    derived for each $u$ and $v$ such that $v \notin \mu[u]$.
	\end{lemma}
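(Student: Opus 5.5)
We prove the lemma by induction following the construction of $\mu = P(G,\nu)$: first along the path from the root to $\nu$, and inside each node along the sequence of elementary operations (initial mapping, individualization, and each split step of the refinement $r$). The strengthened invariant is the following. For every intermediate partition mapping $\mu'$ occurring while computing $P(G,\nu)$, with $M$ the monomial associated with the current node, the three families
\[
R_u=\Bigl(-1+\sum_{v\in\mu'[u]}x_{u,v}\Bigr)M,\qquad
C_v=\Bigl(-1+\sum_{u\in\mu'^{-1}[v]}x_{u,v}\Bigr)M,\qquad
Z_{u,v}=x_{u,v}M \ \ (v\notin\mu'[u])
\]
are derivable. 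If $\mu^{-1}[v]=\emptyset$, then $C_v$ reads $-M$, so $M$ itself is derivable. This is consistent with partial mappings at leaves.

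\textbf{Base case and individualization.} For the root we have $M=1$ and $\mu'(V)=V$. Then $R_u=\operatorname{RowS}_u$ and $C_v=\operatorname{ColS}_v$ are tautologies, and no $Z$ is required.

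Passing from $\nu$ with mapping $\mu$ to $i(\mu,u)$ with monomial $M' = x_{u,w}M$, where $w=\sigma_\mu(u)$, we multiply every old derivable polynomial by $x_{u,w}$ using the Ideal rule. New obligations then arise:
\begin{itemize}
\item $R_u$ becomes $(-1+x_{u,w})x_{u,w}M$, which is $\operatorname{Bin}_{u,w}\cdot M$.
\item $Z_{u,v}$ for $v\in\mu[u]\setminus\{w\}$ is $\operatorname{RowNZ}_{u;w,v}\cdot M$.
\item For $u'\in U\setminus\{u\}$, we have $Z_{u',w}=\operatorname{ColNZ}\cdot M$.
\item For $u'\in U\setminus\{u\}$, $R_{u'}$ is the old $R_{u'}\cdot x_{u,w}$ minus $x_{u',w}x_{u,w}M$, hence derivable.
\item $C_w$ is $(-1+x_{u,w})x_{u,w}M$, reduced as above. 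For $v\in\mu(U)\setminus\{w\}$, $C_v$ is the old $C_v x_{u,w}$ minus $x_{u,v}x_{u,w}M$ (a $\operatorname{RowNZ}$ term).
\end{itemize}

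\textbf{Split step.} This is the main obstacle. Fix a cell $W$ with image $\mu(W)$, a vertex $u\in C$, and $v\in\mu(C)$ with $\deg(u,W)\neq\deg(v,\mu(W))$. We must derive $x_{u,v}M$. Use the $\operatorname{Iso}$ polynomials: summing $\operatorname{Iso}_{u,j}$ over $j\in\mu(W)$ and multiplying by $x_{u,v}M$ gives
\[
x_{u,v}M\Bigl(\sum_{t\in n_G(u)}\sum_{j\in\mu(W)}x_{t,j}-\sum_{j\in\mu(W)}\sum_{t\in n_G(j)}x_{u,t}\Bigr).
\]
We simplify each sum modulo the derivable polynomials.
\begin{itemize}
\item For the first sum, restricting $t$ to $W$ versus outside $W$: for $t\in W$, $R_t$ (times $x_{u,v}$) turns $\sum_{j\in\mu(W)}x_{t,j}$ into $1$. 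For $t\notin W$, each $x_{t,j}M$ with $j\in\mu(W)$ lies in the $Z$ family. This yields $\deg(u,W)\,x_{u,v}M$.
\item For the second sum, $x_{u,v}x_{u,t}M$ vanishes by $\operatorname{RowNZ}$ unless $t=v$, where $\operatorname{Bin}$ gives $x_{u,v}M$. This yields $\deg(v,\mu(W))\,x_{u,v}M$.
\end{itemize}
The result is $(\deg(u,W)-\deg(v,\mu(W)))\,x_{u,v}M$. Since the degrees are less than $n<p$, the coefficient is a nonzero element of $\mathbb{F}_p$, and dividing by it gives $Z_{u,v}$.

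With all the new $Z$'s in hand, the new $R_u$ and $C_v$ for the refined cells are obtained from the old ones by subtracting the newly vanishing terms $x_{u,v}M$ via the Ideal rule. This covers $\mu'(C')\subseteq\mu(C)$ and $\mu'^{-1}[v]\subseteq\mu^{-1}[v]$ restricted to the split cells. Iterating over all split steps of $r$ completes the induction and yields the lemma for $P(G,\nu)$.

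The care needed is mainly bookkeeping for the partial case: a target vertex may lose all preimages, or a cell's image may become empty. The formulas above still apply verbatim with empty sums.
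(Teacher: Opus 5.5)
Your proof is correct and follows the paper's argument. It uses the same induction along the chain of individualization and split steps, the same $\operatorname{Bin}$/$\operatorname{RowNZ}$/$\operatorname{ColNZ}$ bookkeeping for individualization, and the same $\operatorname{Iso}$-sum over $\mu(W)$ followed by division by the nonzero degree difference. The only cosmetic difference is that you multiply the $\operatorname{Iso}$-sum by $x_{u,v}M$ at the outset and let $\operatorname{RowNZ}$ and $\operatorname{Bin}$ collapse the $XA$ side directly. The paper instead first derives $d_{u,W}M_\nu$ using the old exclusions $e_{i,u,t}$, and only then multiplies by $x_{u,v}$.
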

  \begin{proof}[Proof sketch]
    Consider the partition mapping $\mu$ of node $\nu$. Starting from the root
    of the search tree, we may interpret the steps taken to calculate $\mu$ as
    a chain of individualization and refinement steps. Specifically, we
    consider the chain $\mu_0 \overset{i}{\mapsto} \mu_1
    \overset{s}{\mapsto} \dots \overset{s}{\mapsto} \mu_j
    \overset{i}{\mapsto} \mu_{j+1} \overset{s}{\mapsto} \dots
    \overset{s}{\mapsto} \mu$. We will show, using induction on the
    number of refinement and individualization steps, that the appropriate
    polynomials are derivable for each of the partition mappings in the given
    chain. We will use the notation $p = g_1 \underline{p_1} + \dots + g_m
    \underline{p_m}$ to denote that $p_1, \dots, p_m$ are assumed to be
    derivable and that the Ideal rule may be applied with coefficients $g_1,
    \dots, g_m$ to derive $p$.

    \begin{induction}

    \item[Base case] At the root of the search tree $\nu = []$ the
      bijective partition mapping maps the entire set of vertices to itself and no
      refinement or individualization steps have been performed. The
      $\operatorname{RowS}$ and $\operatorname{ColS}$ polynomials given in the
      system $\mathcal{A}(G, a_1, \dots, a_m)$ are exactly the polynomials from
      the statement and they are derivable by the Tautology rule.

    \item[Induction hypothesis] The polynomials $m_{i, u} = (-1 + \sum_{v \in
      \mu_i[u]} x_{u, v}) M_{\nu}$ and $w_{i, v} = (-1 + \sum_{u \in
      \mu_i^{-1}[v]} x_{u, v}) M_{\nu}$ are derivable for each vertex $u$ and
      $v$ in the bijective partition mapping $\mu_i:\mathbb{U}\to\mathbb{V}$ of node
      $\nu$, and the polynomials $e_{i, u, v} = x_{u, v} M_{\nu}$ are derivable
      for each pair of vertices $u$ and $v$ such that $v \notin \mu_i[u]$.

    \item[Induction step] We consider two different induction steps, one for
      individualization and one for refinement.

      (1) Assume $\mu_{i+1}$ is the partition mapping of $\nu' = [\nu, u_k]$ produced by
      individualizing vertex $u_k$ to vertex $v_k$ from $\mu_i$. We may derive
      the existing exclusion $e_{i+1, u, v}$ by multiplying $e_{i, u, v}$ by
      $x_{u_k, v_k}$, or a new one produced by individualization (between $u_k$
      and the rest of $V$, and $v_k$ and the rest of $U$) using RowNZ and ColNZ
      polynomials. Finally, we may derive $m_{i+1, u}$ for any vertex $u$ and
      $w_{i+1, v}$ for any vertex $v$ by subtracting the derived exclusion
      polynomials from $x_{u_k, v_k} m_{i, u}$ and $x_{u_k, v_k} w_{i, v}$
      respectively.

      (2) Assume $\mu_{i+1}$ is produced by splitting $\mu_i$ according to some
      cell $W \in \mathbb{U}$. We show the following identities, which connect
      the Iso family of polynomials to degree counts:
      \begin{align*}
        \sum_{v \in \mu_i(W)} (AX)_{u, v} &= \sum_{t \in n_G(u)} \sum_{v \in \mu_i(W)} x_{t, v} \\
        \sum_{v \in \mu_i(W)} (XA)_{u, v} &= \sum_{t=0}^{n-1} x_{u, t} \operatorname{deg}(t, \mu_i(W))
      \end{align*}

    Let $d_{u, W} = \operatorname{deg}(u, W) - \sum_{v \in \mu_i[u]} x_{u, v}
      \operatorname{deg}(v, \mu_i(W))$. For any vertex $u$ we show that $d_{u,
      W} M_{\nu}$ is derivable, by a combination of the aforementioned
      identities and the induction hypothesis. We then show that $ce_{i+1, u,
      v}$ is derivable, where $c=\operatorname{deg}(u, W) -
      \operatorname{deg}(v, \mu_i(W))$. This means that $e_{i+1, u, v}$ is
      derivable when $c$ is invertible in $\mathbb{F}_p$, and since $|c| < n <
      p$, that is when $c \neq 0$. In other words, $e_{i+1, u, v}$ is derivable
      if $u$ and $v$ get separated by refinement. Finally, we may derive
      $m_{i+1, u}$ for any vertex $u$ and $w_{i+1, v}$ for any vertex $v$ by
      subtracting the derived exclusion polynomials from $m_{i, u}$ and $w_{i,
      v}$ respectively. \qedhere
    \end{induction}
	\end{proof}

	\begin{example}
		Let $\mu = \begin{smallmatrix} \{0, 3\} \{1, 2, 4\} \\ \{0, 3\} \{1, 2, 4\}
		\end{smallmatrix}$ be the bijective partition mapping at node $\nu = []$. Then $x_{3,
		0} + x_{3, 3} - 1$ is derivable since $\mu[3]=\{0, 3\}$, $x_{1, 2} + x_{2, 2} +
		x_{4, 2} - 1$ is derivable since $\mu^{-1}[2] = \{1, 2, 4\}$, and $x_{0, 4}$ is
		derivable since $4 \notin \mu[0]$.
	\end{example}

  We now present the main result of this section. The theorem states that
  the algorithm is always able to construct the certificate of infeasibility of
  $\mathcal{A}(G, a_1, \dots, a_m)$.

	\begin{theorem}
		\label{thm:monotree}
    Let $a_1, \dots, a_m$ be the automorphisms found at
    the leaves of the pruned search tree
    $\mathcal{T}(G)$ and the automorphisms used in the
    pruning operations $\Pi$. Let $\nu$ be a node of
    the pruned tree or the root of a pruned subtree.
    Then $M_\nu$ is derivable from $\mathcal{A}(G, a_1,
    \dots, a_m)$.
	\end{theorem}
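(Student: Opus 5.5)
We prove the statement by strong induction on the height of the subtree below $\nu$, so that leaves and pruned roots come first. Three kinds of base case must be handled before the general inner-node step.

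\textbf{Leaves with a partial mapping.} Let $\nu$ be a leaf where refinement stopped with a partial mapping $\mu$. Then some cell $C$ has $|C| \neq |\mu(C)|$. Sum the polynomials $(-1+\sum_{v\in\mu[u]}x_{u,v})M_\nu$ from Lemma \ref{lemma:monorefine} over $u\in C$, and subtract the sum of $(-1+\sum_{u\in\mu^{-1}[v]}x_{u,v})M_\nu$ over $v\in\mu(C)$. Each term $x_{u,v}M_\nu$ with $u\in C$ and $v\notin\mu(C)$ is itself derivable by the lemma, so those terms can be removed. Since the images $\mu(C')$ are pairwise disjoint, any $v\in\mu(C)$ has $\mu^{-1}[v]=C$, so the remaining terms $x_{u,v}M_\nu$ with $u\in C$, $v\in\mu(C)$ cancel. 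We are left with $(|\mu(C)|-|C|)M_\nu$. Its coefficient is nonzero and of absolute value less than $n<p$, so it is invertible in $\mathbb{F}_p$, and one Ideal step gives $M_\nu$. A degenerate case needs care: if $\mu(C)=\emptyset$, the first sum alone already gives $-|C|M_\nu$.

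\textbf{Leaves with a bijective mapping.} Let $\nu$ be a leaf with discrete bijective $\mu$. Then $\sigma_\mu=a$ is one of the $a_i$, or at least lies in $\langle a_1,\dots,a_m\rangle$ by Lemma \ref{lemma:leaf_aut} and the hypothesis. Start from the Tautology $\operatorname{ExIso}_a=\prod_u x_{u,a(u)}$. By Lemma \ref{lemma:sigma}, every individualized $u_i$ satisfies $a(u_i)=v_i$, so $M_\nu$ divides $\operatorname{ExIso}_a$. For each non-individualized $u$ we have $\mu[u]=\{a(u)\}$, so Lemma \ref{lemma:monorefine} derives $(x_{u,a(u)}-1)M_\nu$. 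We then strip the extra factors one at a time: if $M_\nu\prod_{u\in S}x_{u,a(u)}$ is derivable, subtract $\prod_{u\in S\setminus\{w\}}x_{u,a(u)}\cdot(x_{w,a(w)}-1)M_\nu$. This ends at $M_\nu$, and we may use $\operatorname{Bin}$ to reduce squares if repeated factors appear.

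\textbf{Roots of pruned subtrees.} Let $[\nu,u']$ be pruned via $\Pi([\nu,u],\sigma)$ with $[\nu,u]^\sigma=[\nu,u']$, where $[\nu,u]$ is processed first, so its monomial is derivable by induction or by earlier processing. Apply Substitution with $s(x_{i,j})=x_{\sigma(i),j}$, which is allowed by Theorem \ref{thm:closed}. This sends $M_{[\nu,u]}=\prod x_{u_i,v_i}$ to $\prod x_{\sigma(u_i),v_i}$, and we must check that this equals $M_{[\nu,u']}$. Since $[\nu,u]^\sigma=[\nu,u']$, $\sigma$ fixes the entries of $\nu$ and maps $u$ to $u'$. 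Both $u$ and $u'$ lie in the same target cell $T(G,\nu)$, so their $v$-values coincide: $v=\sigma_{P(G,\nu)}(u)=\min\mu[u]=\sigma_{P(G,\nu)}(u')$. This is the same argument as in Lemma \ref{lemma:pruning_discovery}. The main obstacle is ordering: the node used for pruning must have a derivation established before the pruned one, which holds if we process siblings in vertex order, since $u<u'$. Pruning via automorphisms discovered elsewhere in the tree, as in Lemma \ref{lemma:pruning_discovery}, still lands inside the same sibling set.

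\textbf{Inner nodes.} Let $\nu$ be an inner node with $\mu=P(G,\nu)$ bijective and target cell $T=T(G,\nu)$, and assume every child $[\nu,u]$ with $u\in T$ has a derivable $M_{[\nu,u]}=x_{u,v}M_\nu$, where $v=\min\mu[u]$ is the same $v$ for all $u\in T$. Lemma \ref{lemma:monorefine} gives $(-1+\sum_{u\in\mu^{-1}[v]}x_{u,v})M_\nu$, and $\mu^{-1}[v]=T$. Adding the children's monomials yields $M_\nu$. At the root this gives $M_{[]}=1$.
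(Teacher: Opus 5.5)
Your proposal is correct and follows the paper's proof essentially case for case. It uses the same four cases: bijective leaves via $\operatorname{ExIso}_{\sigma_\mu}$ and stripping factors with $(x_{u,a(u)}-1)M_\nu$; partial leaves via the counting identity yielding $(|\mu(C)|-|C|)M_\nu$; pruned roots via the Substitution rule with $\sigma$; and inner nodes via the column polynomial for the target cell. Your explicit attention to derivation order is something the paper leaves implicit. Strictly, though, the induction should run along a post-order traversal with siblings taken in increasing vertex order, not on subtree height, because a pruned root has height $0$ yet depends on a sibling whose subtree may be taller.
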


	\begin{proof}
    Let $\nu = [u_1, \dots, u_k]$ and $v_i = \sigma_{P(G, [\nu]_{i-1})}(u_i)$ for $1
    \leq i \leq k$. We consider four different cases.

    (1) $\nu$ is a leaf with a bijective partition mapping $\mu =
    P(G, \nu)$. Let $u_{k+1}, \dots, u_n$ be all of the vertices
    that were not individualized, i.e. which do not appear in
    $\nu = [u_1, \dots, u_k]$, and let $v_{k+1} =
    \sigma_\mu(u_{k+1}), \dots, v_n=\sigma_\mu(u_n)$. From Lemma
    \ref{lemma:sigma} we also have $v_i = \sigma_\mu(u_i)$ for $1
    \leq i \leq k$. Based on Lemma \ref{lemma:monorefine} and the
    fact that $\mu$ is discrete, $(x_{u_j, v_j} - 1) M_\nu$ is
    derivable for any $k < j \leq n$. We show that
    $\prod_{i=1}^{n-1} x_{u_i, v_i}$ is derivable:

    \begin{align*}
      \prod_{i=1}^{n-1} x_{u_i, v_i} &= \prod_{i=1}^n x_{u_i, v_i} - \prod_{i=1}^n x_{u_i, v_i} + \prod_{i=1}^{n-1} x_{u_i, v_i} \\
      &= \prod_{i=1}^n x_{u_i, v_i} - (\prod_{\mathclap{j=k+1}}^n x_{u_j, v_j} - \prod_{\mathclap{j=k+1}}^{n-1} x_{u_j, v_j}) \prod_{i=1}^k x_{u_i, v_i} \\
      &= \prod_{i=1}^n x_{u_i, v_i} - \prod_{\mathclap{j=k+1}}^{n-1} x_{u_j, v_j} (x_{u_n, v_n} - 1) \prod_{i=1}^k x_{u_i, v_i} \\
      &= \underline{\operatorname{ExIso}_{\sigma_\mu}} - \prod_{\mathclap{j=k+1}}^{n-1} x_{u_j, v_j} \underline{(x_{u_n, v_n} - 1) M_\nu}
    \end{align*}
    Similarly, $\prod_{i=1}^{n-2} x_{u_i, v_i} =
    \underline{\prod_{i=1}^{n-1} x_{u_i, v_i}} -
    \prod_{j=k+1}^{n-2} x_{u_j, v_j} \underline{(x_{u_{n-1},
    v_{n-1}} - 1) M_\nu}$ is derivable, and this process may be
    repeated to show that $M_\nu = \prod_{i=1}^k x_{u_i, v_i}$ is
    derivable. Note that $\sigma_\mu$ is one of $a_1, \dots, a_m$
    since it is an automorphism found at the leaf $\nu$ of the
    search tree, due to Lemma \ref{lemma:leaf_aut}, which means
    that $\operatorname{ExIso}_{\sigma_\mu}$ is derivable.

    (2) $\nu = [u_1, \dots, u_k]$ is pruned by $\Pi([u_1, \dots,
    u_{k-1}, u'_k], \sigma)$. Since $u_k$ and $u'_k$ share the
    same cell in $\mu = P(G, [u_1, \dots, u_{k-1}])$, it holds
    that $\sigma_\mu(u_k) = \sigma_\mu(u'_k)$. We then have:

    \begin{align*}
      M_\nu &= M_{[u_1, \dots, u_{k-1}]} x_{u_k, \sigma_\mu(u_k)} \\
      &= M_{[u_1, \dots, u_{k-1}]} x_{\sigma(u'_k), \sigma_\mu(u'_k)} \\
      &= M_{[u_1, \dots, u_{k-1}]} x_{u'_k, \sigma_\mu(u'_k)}^\sigma \\
      &= \underline{M_{[u_1, \dots, u_{k-1}, u'_k]}}^\sigma
    \end{align*}

    (3) $\nu$ is a leaf with a partial partition mapping $\mu$. There exists a
    cell $U$ such that $|U| \neq |\mu(U)|$. Due to Lemma \ref{lemma:monorefine}
    the polynomials $m_u = (-1 + \sum_{v \in \mu(U)} x_{u, v}) M_\nu$ and $w_v = (-1
    + \sum_{u \in U} x_{u, v}) M_\nu$ are derivable for any vertex $u \in U$ and
    $v \in \mu(U)$. Let
    $c = |U| - |\mu(U)|$ and note that $c$ is invertible in $\mathbb{F}_p$
    since $0 \leq |U|, |\mu(U)| \leq n < p$ and $c \neq 0$. We have:

    \begin{align*}
      & \sum_{\mathclap{v \in \mu(U)}} w_v - \sum_{\mathclap{u \in U}} m_u \\
      =& \sum_{\mathclap{v \in \mu(U)}} (-1 + \sum_{\mathclap{u \in U}} x_{u, v}) M_\nu - \sum_{\mathclap{u \in U}} (-1 + \sum_{\mathclap{v \in \mu(U)}} x_{u, v}) M_\nu \\
      =& -|\mu(U)|M_\nu + \smashoperator[l]{\sum_{v \in \mu(U)}} \sum_{u \in U} x_{u, v} M_\nu + |U|M_\nu - \sum_{\mathclap{u \in U}} \smashoperator[r]{\sum_{v \in \mu(U)}} x_{u, v} M_\nu \\
      =& -|\mu(U)|M_\nu + |U|M_\nu \\
      =& c M_\nu
    \end{align*}
    From this, we can derive $M_\nu$ by multiplying with $c^{-1}$.

    (4) $\nu$ is an inner node. Let $\nu_1, \dots, \nu_l$ be its child nodes
    produced by individualizing $u'_1, \dots, u'_l$ respectively, and let $v'$
    be the vertex being mapped to. Due to Lemma \ref{lemma:monorefine}, the
    polynomial $w = (-1 + \sum_{i=1}^l x_{u'_i, v'}) M_\nu$ is derivable. We can
    derive $M_\nu$ as follows:

    \begin{align*}
      M_\nu &= M_\nu - \sum_{j=1}^l M_{\nu_j} + \sum_{j=1}^l M_{\nu_j} \\
      &= M_\nu - \sum_{j=1}^l x_{u'_j, v'} M_\nu + \sum_{j=1}^l M_{\nu_j} \\
      &= -M_\nu (-1 + \sum_{i=1}^l x_{u'_i, v'}) + \sum_{j=1}^l M_{\nu_j} \\
      &= -\underline{w} + \sum_{j=1}^l \underline{M_{\nu_j}}
    \end{align*}
	\end{proof}

	\begin{example}
		The example in Figure \ref{img:proof} shows how the monomials corresponding to
		nodes from Figure \ref{img:refinement_partition} may be derived, following the
    construction from the proof of Theorem \ref{thm:monotree}.
		\begin{figure}[H]
		\begin{center}
			\begin{tikzpicture}
  \begin{scope}
    \node (0) at (0, 0) [] {$\small\begin{aligned}1 &= \underline{x_{a, a}} + \underline{x_{c, a}} \\ &- \underline{(x_{a, a} + x_{c, a} - 1)}\end{aligned}$};
    \node (1) at (-2.5, -2) [] {$\small\begin{aligned}x_{a, a} &= \underline{x_{a, a} x_{b, b}} \\ &+ \underline{x_{a, a} x_{d, b}} \\ &- \underline{x_{a, a}(x_{b, b} + x_{d, b} - 1)}\end{aligned}$};
    \node (2) at (2, -1.5) [] {$\small\begin{aligned}x_{c, a} &= \underline{x_{a, a}}^{(a\ c)}\end{aligned}$};
    \node (3) at (-5, -4.5) [] {$\small\begin{aligned}x_{a, a} x_{b, b} &= \underline{x_{a, a} x_{b, b} x_{c, c} x_{d, d}} \\ &- x_{c, c} \underline{(x_{d, d} - 1) x_{a, a} x_{b, b}} \\ &- \underline{(x_{c, c} - 1) x_{a, a} x_{b, b}}\end{aligned}$};
    \node (4) at (0, -4) [] {$\small\begin{aligned}x_{a, a} x_{d, b} &= (\underline{x_{a, a} x_{b, b}})^{(b\ d)}\end{aligned}$};
  \end{scope}
  \begin{scope}
    \path [-] (0) edge (1);
    \path [-] (0) edge (2);
    \path [-] (1) edge (3);
    \path [-] (1) edge (4);
  \end{scope}
\end{tikzpicture}
		\end{center}
      \caption{A tree showing how each monomial may be derived from
      $\mathcal{A}(G, a_1, \dots, a_m)$. The underlined polynomials are assumed
      to be derivable, either as part of the set $\mathcal{A}(G, a_1, \dots,
      a_m)$ using the Tautology rule, through Lemma \ref{lemma:monorefine}, or
      from the derivation in the child nodes.}
		\label{img:proof}
		\end{figure}
	\end{example}

  The following corollary shows the completeness of the construction and
  provides justification for the pruning operation $\Pi$.

  \begin{corollary}
    \label{cor:pruning}
    Let $a_1, \dots, a_m$ be the automorphisms found at the leaves of the
    pruned search tree $\mathcal{T}(G)$ and the automorphisms used in the
    pruning operations $\Pi$. Then $1$ is derivable from $\mathcal{A}(G, a_1,
    \dots, a_m)$ and $\operatorname{Aut}(G) = \langle a_1, \dots,
    a_m \rangle$.
  \end{corollary}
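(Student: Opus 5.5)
The plan is to apply Theorem~\ref{thm:monotree} to the root of the pruned search tree and then chain together Corollary~\ref{cor:soundness} and Theorem~\ref{thm:system_generators}.

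First, the root $[]$ is a node of the pruned tree. Pruning only removes subtrees rooted at nodes $[\nu,u']$ with $u'$ a child, so it never removes the root. Theorem~\ref{thm:monotree} therefore applies to $\nu=[]$. Its monomial $M_{[]}$ is the empty product $1$, so $1$ is derivable from $\mathcal{A}(G,a_1,\dots,a_m)$. One formal point needs checking here: Theorem~\ref{thm:monotree} is proved by induction over the pruned tree, bottom-up. Each case relies only on leaves (cases 1 and 3), on already derived children (case 4), or on a sibling $[u_1,\dots,u_{k-1},u'_k]$ with $u'_k<u_k$ that was itself not pruned (case 2). The part of this I expect to require the most care is well-foundedness of the pruning references. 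A pruned subtree's root refers to a sibling that precedes it. That sibling might itself be pruned by an even earlier sibling, but the chain of references strictly decreases and so terminates at an unpruned node. I would make this explicit by ordering siblings by vertex index and inducting on height, then on sibling index.

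Second, one must check that the substitutions used in case 2 are legal in the proof system. The pruning automorphisms $\sigma$ are among $a_1,\dots,a_m$ by hypothesis, so $\sigma\in\langle a_1,\dots,a_m\rangle$. By Theorem~\ref{thm:closed} the system is then closed under the induced substitution, and the Substitution rule applies. Similarly, the $\operatorname{ExIso}_{\sigma_\mu}$ tautologies used in case 1 exist because each leaf automorphism $\sigma_\mu$ is in the list, using Lemma~\ref{lemma:leaf_aut}.

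Finally, since $1$ is derivable, Corollary~\ref{cor:soundness} gives $V(\mathcal{A}(G,a_1,\dots,a_m))=\emptyset$. Soundness holds over any field, so nothing here depends on algebraic closure. Theorem~\ref{thm:system_generators} then yields $\operatorname{Aut}(G)=\langle a_1,\dots,a_m\rangle$. This also justifies $\Pi$: the automorphisms discarded with the pruned subtrees are already generated by the retained ones, since the whole group is.
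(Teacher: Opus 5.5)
Your proposal is correct and follows the paper's proof: Theorem~\ref{thm:monotree} applied to the root gives that $M_{[]}=1$ is derivable, Corollary~\ref{cor:soundness} then gives infeasibility, and Theorem~\ref{thm:system_generators} gives $\operatorname{Aut}(G)=\langle a_1,\dots,a_m\rangle$. Your extra checks (legality of the substitutions via Theorem~\ref{thm:closed}, availability of the $\operatorname{ExIso}$ tautologies, and well-foundedness of the pruning references) are details the paper leaves implicit inside Theorem~\ref{thm:monotree}; for the last one, induct first on depth, deepest level first, and then on sibling index, rather than on subtree height, because a pruned root has height $0$ in the pruned tree yet may refer to a taller sibling.
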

  \begin{proof}
    As a consequence of Theorem \ref{thm:monotree}, the root monomial $M_{[]} =
    1$ is derivable from $\mathcal{A}(G, a_1, \dots, a_m)$. By Corollary
    \ref{cor:soundness}, $\mathcal{A}(G, a_1, \dots, a_m)$ has no common
    zero, i.e. it is infeasible, and by Theorem \ref{thm:system_generators},
    $\mathcal{A}(G, a_1, \dots, a_m)$ is infeasible if and only if
    $\operatorname{Aut}(G) = \langle a_1, \dots, a_m \rangle$.
  \end{proof}

	\section{Implementation and results}

  The implementation consists of three components. The \emph{solver} implements
  the initial search phase of the algorithm described in Section
  \ref{sec:construction}. It takes a graph in the DIMACS format as input, and
  outputs a generating set of its automorphism group. The \emph{prover}
  implements the proof generation phase of the algorithm, executing the same
  search tree traversal, this time taking into account the already discovered
  automorphisms to prune the tree as early as possible. It takes the graph and
  the automorphisms as inputs, and produces a proof within the described proof
  system, certifying that the input automorphisms generate the entirety of the
  automorphism group. The \emph{verifier} takes this proof as input alongside
  the graph and the list of automorphisms, and checks the proof step by step in
  order to verify its correctness.

  The burden of trust is put entirely on the implementation of the verifier.
  Its correctness relies on the claims of soundness of the proof system,
  specifically Theorems \ref{thm:soundness} (and its corollary),
  \ref{thm:system_generators} and \ref{thm:closed}.  Therefore, by putting
  trust in the verifier, one also implicitly puts trust in the validity of
  those statements.

	\subsection{Implementation details}

  A prototype implementation of the three components has been developed in C++%
  \footnote{The source code is available at
  \url{https://github.com/idrecun/graph-aut-cert}}. The solver is implemented
  as a modification of the McKay-Piperno algorithm \cite{McKayPiperno}. Only the automorphism
  discovery aspect of the algorithm is implemented, and the modified pruning
  operations based on partition mappings were used instead of the typical
  pruning operations based on node invariants. More specifically, the algorithm
  maintains two colorings of graphs such that the mappings between the cells in
  corresponding positions match the partition mappings associated with the
  nodes of the search tree.

  The prover implements the same algorithm as the solver, but with added logic
  for proof generation. The proof format is a binary format consisting of a
  sequence of proof steps. Each proof step consists of a rule identifier,
  followed by its parameters. 

  The prover maintains a stack of proof contexts, each corresponding to a node
  along the current path in the search tree. The proof context holds the
  references to the steps whose resulting polynomial may be used in the
  later steps of the proof (specifically with the Ideal and Substitution
  rules). The proof is constructed following the steps outlined in the proof of
  Theorem \ref{thm:monotree} and its corresponding lemma. Because of this
  bottom-up construction, some of the steps may be redundant, i.e. not used to
  derive the final certificate, and are pruned before finalizing the proof
  file.

  The verifier first verifies that the input graph and the automorphisms are
  valid. It then parses the proof file and processes the proof steps one by
  one, calculating the resulting polynomial for the applied rule, and appending
  it to the array of derived polynomials. In the end, the verifier checks
  whether the last derived polynomial equals 1, and accepts or rejects the
  proof as a valid derivation of a Nullstellensatz certificate.

  The specific rules, their parameters and the resulting polynomials are
  described in the table below. The sequence of derived polynomials is denoted
  by $f$, and the sequence of input automorphisms is denoted by $a$.

  \begin{table}[!h]
    \begin{center}
    \begin{tabular}{ |l|l|l| }
      \hline
      \textbf{Rule} & \textbf{Parameters} & \textbf{Result} \\
      \hline
      Ideal & $0,n,i_1,\dots,i_n,p_1,\dots,p_n$ & $\sum_{j=1}^n f_{i_j} p_j$ \\
      \hline
      Substitution & $1,i,j$ & $f_i^{a_j}$ \\
      \hline
      Tautology (RowNZ) & $2,i,j,k$ & $\operatorname{RowNZ}_{i;j,k}$ \\
      \hline
      Tautology (ColNZ) & $3,i,j,k$ & $\operatorname{ColNZ}_{i,j;k}$ \\
      \hline
      Tautology (RowS) & $4,i$ & $\operatorname{RowS}_i$ \\
      \hline
      Tautology (ColS) & $5,j$ & $\operatorname{ColS}_j$ \\
      \hline
      Tautology (Iso) & $6,i,j$ & $\operatorname{Iso}_{i,j}$ \\
      \hline
      Tautology (ExIso) & $7,i$ & $\operatorname{ExIso}_{a_i}$ \\
      \hline
      Tautology (Bin) & $8,i,j$ & $\operatorname{Bin}_{i,j}$ \\
      \hline
    \end{tabular}
    \caption{Proof rule format specification.}
    \end{center}
  \end{table}

  The polynomials in the Ideal rule are represented by the sequence of
  monomials $n,m_1, \dots, m_n$ where $n$ is the sequence length. The monomials
  within a polynomial may be ordered arbitrarily. The monomials are represented
  by the sequence of variable-power pairs, followed by the coefficient $n, v_1,
  \dots, v_n, c$, where $n$ is the sequence length. Specifically, a
  variable-power pair $x_{i,j}^n$ is represented as a 4-byte integer where the
  12 highest order bits represent the row index $i$, the next
  12 bits represent the column index $j$, and the 8 lowest order bits represent
  the power $n$. These pairs must be ordered lexicographically within a
  monomial, with no two pairs having the same variable. The implementation
  works with a fixed finite field $\mathbb{F}_{4099}$, hence the coefficient
  must be a valid 2-byte integer in the range $[1, 4098]$. Proof length is a
  4-byte integer, rule identifiers are 1-byte integers, vertices and all
  sequence lengths are 2-byte integers, and polynomial and automorphism indices
  are 4-byte integers.

	\subsection{Experimental evaluation}

  The implementation was tested on a set of benchmark graphs from the
  \emph{bliss} collection\footnote{The benchmark set is available at
  \url{https://pallini.di.uniroma1.it/Graphs.html}}. All instances with up to a
  thousand vertices, apart from those with non-trivial initial colorings, were
  included. The experiments were run on a machine with an Intel i5-8600K
  processor at 3.6 GHz, and with 32 GB of RAM. Each instance was tested with a
  timeout of 5 seconds for the solver, and 3 minutes for the prover. No timeout
  was set for the verifier. A limit of 32 GB of RAM was set for each stage. All
  operations were single-threaded due to the nature of the prototype
  implementation.

  Out of 1107 attempted instances, the proof was successfully generated and
  verified for 469 of them, with the execution on the remaining ones being
  terminated early due to hitting any of the aforementioned limits. A breakdown
  of the results for different families of graphs is presented in Table
  \ref{table:results}. The results suggest that the approach is feasible for
  graphs in the order of hundreds of vertices, though this varies depending on
  the graph structure.
  \begin{table}[!h]
    \begin{center}
    {\small\begin{tabular}{ |l|c|c|c|l| }
  \hline
  \textbf{Family} & \textbf{Attempted} & \textbf{Verified} & \textbf{Largest verified $n$} & \textbf{Main failure} \\
  \hline
  ag & 12 & 5 & 105 & solver timeout \\
  \hline
  cfi & 41 & 24 & 680 & verifier memory \\
  \hline
  cmz & 37 & 8 & 288 & solver timeout \\
  \hline
  grid & 15 & 12 & 625 & verifier memory \\
  \hline
  grid-w & 15 & 15 & 1000 & -- \\
  \hline
  had & 64 & 22 & 512 & prover timeout \\
  \hline
  had-sw & 51 & 1 & 80 & solver timeout \\
  \hline
  k & 99 & 99 & 100 & -- \\
  \hline
  latin & 29 & 25 & 676 & verifier memory \\
  \hline
  latin-sw & 231 & 11 & 100 & prover timeout \\
  \hline
  lattice & 27 & 27 & 900 & -- \\
  \hline
  mz & 25 & 25 & 1000 & -- \\
  \hline
  mz-aug & 25 & 5 & 200 & solver timeout \\
  \hline
  mz-aug2 & 19 & 4 & 240 & solver timeout \\
  \hline
  paley & 53 & 53 & 461 & -- \\
  \hline
  pg & 12 & 4 & 62 & solver timeout \\
  \hline
  rnd-3-reg & 11 & 0 & -- & prover timeout \\
  \hline
  sat\_cfi\_dim & 70 & 44 & 700 & verifier memory \\
  \hline
  sts & 24 & 14 & 532 & prover timeout \\
  \hline
  sts-sw & 220 & 44 & 117 & prover timeout \\
  \hline
  triang & 27 & 27 & 435 & -- \\
  \hline
  \textbf{Total} & 1107 & 469 & & \\
  \hline
\end{tabular}
}
      \caption{Execution results per family, showing the number of attempted
      instances (i.e. excluding those with vertex counts over 1000), number of
      successfully verified instances (i.e. where the entire pipeline completed
      successfully), the largest verified instance and the most frequent
      failure reason.}
    \label{table:results}
    \end{center}
  \end{table}

  Among the three stages of the process, the prover and verifier dominate
  the execution time, with the solver time being negligible in comparison.
  Figure \ref{fig:time} shows the breakdown of the execution time for each
  of the three components. Note that all of the plots that follow consider
  only the 469 successfully verified instances.
  \begin{figure}[H]
  \begin{center}
    \input{plot_times_vs_graph_size.tex}
  \end{center}
  \caption{Breakdown of execution time against graph size.}
  \label{fig:time}
  \end{figure}

  Figure \ref{fig:proof} shows the relation between the proof generation and
  verification times and the proof sizes. More specifically, the correlation
  between the proof generation time and the unpruned proof size is $r \approx
  0.99$. The correlation between the verification time and the final proof size
  is $r \approx 0.99$. This indicates that the primary bottleneck in both proof
  generation and verification is proof size. 
  \begin{figure}[H]
  \begin{center}
    \begin{tikzpicture}
\begin{axis}[
    width=6cm,
    height=6cm,
    xlabel={Unpruned proof size (GB)},
    ylabel={Prove time (ms)},
    mark options=solid,
    unbounded coords=discard,
]
\addplot+[mark=+,red,only marks] coordinates {(0.0000,3) (0.0002,5) (0.0007,33) (0.0091,94) (9.9032,83660) (0.3282,2581) (0.3984,3406) (0.4305,3643) (0.5302,4418) (0.8107,6531) (0.9427,7412) (1.0227,8211) (1.2847,10992) (1.2896,10586) (1.3131,12294) (1.4574,12803) (2.1404,18248) (1.8683,16199) (3.0055,28539) (2.5041,22925) (3.4325,31651) (2.0534,18831) (4.4556,40225) (5.2049,48740) (5.9062,57593) (5.4099,52555) (7.7760,72510) (10.2117,92347) (4.8466,51138) (0.0387,367) (0.0897,895) (0.2071,2005) (0.4833,4619) (1.1381,10866) (2.6890,25287) (6.3399,63756) (14.8633,155451) (0.0000,3) (0.0002,6) (0.0003,8) (0.0033,46) (0.0107,92) (0.0235,197) (0.1136,951) (0.1011,830) (0.4386,3709) (0.5081,4339) (1.4075,12180) (1.8006,16019) (0.0000,4) (0.0001,10) (0.0002,7) (0.0015,48) (0.0028,34) (0.0066,64) (0.0273,242) (0.0143,125) (0.0686,620) (0.0511,480) (0.1988,1915) (0.1212,1213) (0.3798,3536) (0.2726,2757) (0.8619,8216) (0.0000,3) (0.0000,4) (0.0000,5) (0.0003,11) (0.0009,38) (0.0022,39) (0.0413,320) (0.0148,122) (0.3439,2320) (0.2973,1980) (1.0464,7098) (1.5643,10668) (1.1201,7884) (1.6397,11457) (14.7732,117212) (4.4815,33398) (0.1039,878) (7.7812,58042) (10.0497,80894) (16.2843,124446) (20.3081,157547) (0.7671,6440) (0.0877,653) (0.0000,3) (0.0000,4) (0.0000,4) (0.0000,3) (0.0000,4) (0.0000,3) (0.0000,3) (0.0000,4) (0.0000,5) (0.0000,5) (0.0000,6) (0.0000,6) (0.0000,4) (0.0001,7) (0.0001,10) (0.0001,8) (0.0001,5) (0.0001,9) (0.0001,6) (0.0001,11) (0.0002,12) (0.0002,13) (0.0002,14) (0.0002,6) (0.0002,6) (0.0003,18) (0.0003,19) (0.0003,21) (0.0004,22) (0.0004,24) (0.0005,26) (0.0005,27) (0.0005,21) (0.0006,31) (0.0006,24) (0.0007,16) (0.0008,15) (0.0008,28) (0.0009,21) (0.0009,16) (0.0010,16) (0.0011,24) (0.0012,29) (0.0012,32) (0.0013,32) (0.0014,23) (0.0015,41) (0.0016,28) (0.0017,28) (0.0018,21) (0.0019,28) (0.0020,28) (0.0021,27) (0.0022,24) (0.0023,31) (0.0025,35) (0.0026,28) (0.0027,28) (0.0029,29) (0.0030,31) (0.0032,35) (0.0033,38) (0.0035,37) (0.0036,37) (0.0038,45) (0.0040,39) (0.0042,42) (0.0043,46) (0.0045,44) (0.0047,46) (0.0049,47) (0.0051,51) (0.0053,52) (0.0056,53) (0.0058,56) (0.0060,58) (0.0062,59) (0.0065,61) (0.0067,63) (0.0070,66) (0.0072,68) (0.0075,77) (0.0078,73) (0.0081,75) (0.0083,78) (0.0086,79) (0.0089,82) (0.0092,85) (0.0096,87) (0.0099,90) (0.0102,95) (0.0105,96) (0.0109,98) (0.0112,100) (0.0116,104) (0.0119,108) (0.0123,111) (0.0127,114) (0.0131,116) (0.0000,3) (0.0000,3) (0.0001,6) (0.0002,5) (0.0006,9) (0.0008,12) (0.0027,32) (0.0044,42) (0.0095,75) (0.0094,82) (0.0407,311) (0.0245,212) (0.0686,564) (0.1151,901) (0.1917,1496) (0.1203,1021) (0.4796,3723) (0.2344,1973) (0.9152,7255) (0.8806,7059) (1.0462,8639) (0.7447,6399) (3.6899,31254) (1.9470,16016) (2.9024,26181) (5.7529,43803) (5.8035,40648) (5.8453,41448) (5.7442,43136) (5.7811,41096) (5.7505,43178) (5.8251,41205) (8.0890,56401) (5.8329,40588) (5.8189,41804) (5.9378,44499) (0.0001,4) (0.0002,6) (0.0004,8) (0.0008,42) (0.0016,36) (0.0029,47) (0.0049,64) (0.0079,78) (0.0122,124) (0.0181,187) (0.0261,248) (0.0367,348) (0.0504,483) (0.0681,658) (0.0903,873) (0.1181,1132) (0.1522,1461) (0.1938,1888) (0.2441,2338) (0.3043,2937) (0.3757,3758) (0.4601,4598) (0.5588,5814) (0.6739,6733) (0.8072,8106) (0.9607,10337) (1.1368,11339) (0.0011,13) (0.0049,64) (0.0123,105) (0.0238,195) (0.0400,341) (0.0618,535) (0.0896,801) (0.1241,1121) (0.1661,1502) (0.2161,1984) (0.2748,2543) (0.3429,3230) (0.4210,4004) (0.5097,5037) (0.6098,6026) (0.7218,7178) (0.8464,8473) (0.9844,9878) (1.1362,11461) (1.3026,13213) (1.4843,15135) (1.6818,17176) (1.8959,19492) (2.1272,21942) (2.3763,24549) (0.0015,22) (0.0157,162) (0.1014,837) (0.6147,5178) (3.6157,30371) (0.0226,197) (0.1474,1267) (0.8920,7486) (5.2330,44522) (0.0000,3) (0.0000,5) (0.0000,8) (0.0001,6) (0.0002,11) (0.0002,13) (0.0004,21) (0.0006,9) (0.0009,19) (0.0011,13) (0.0016,30) (0.0027,25) (0.0039,39) (0.0047,42) (0.0060,52) (0.0068,57) (0.0084,71) (0.0093,78) (0.0116,104) (0.0128,110) (0.0162,132) (0.0206,169) (0.0240,197) (0.0303,257) (0.0317,260) (0.0363,297) (0.0438,368) (0.0464,398) (0.0721,587) (0.0757,625) (0.0836,685) (0.1010,823) (0.1155,946) (0.1258,1026) (0.1313,1085) (0.1446,1236) (0.1482,1217) (0.1805,1500) (0.1872,1559) (0.2244,1881) (0.2486,2103) (0.2572,2115) (0.2774,2435) (0.3027,2576) (0.3424,2836) (0.3639,3045) (0.3748,3117) (0.3975,3303) (0.4330,3591) (0.4708,3882) (0.5238,4508) (0.5524,4557) (0.5663,4729) (0.0000,4) (0.0002,6) (0.0007,12) (0.0220,218) (0.0003,7) (0.0003,16) (0.0022,21) (0.0022,20) (0.0072,66) (0.0068,61) (0.0099,86) (0.0089,75) (0.0154,126) (0.0163,134) (0.0303,250) (0.0296,242) (0.0503,432) (0.0495,423) (0.0632,531) (0.0644,543) (0.0777,631) (0.0773,625) (0.1207,980) (0.1138,908) (0.1666,1355) (0.1673,1365) (0.2330,1918) (0.2320,1907) (0.2262,1902) (0.1955,1604) (0.5215,4488) (0.5006,4298) (1.0005,8491) (0.8883,7542) (1.7148,14571) (1.7294,14825) (1.7291,14743) (1.7610,15166) (1.7320,14645) (1.5765,13752) (1.4973,13158) (2.4317,20735) (2.4542,21206) (2.4575,21051) (2.4886,22805) (2.4198,20756) (2.5439,24239) (2.5320,24255) (0.0000,4) (0.0000,4) (0.0010,37) (0.0023,35) (0.1705,958) (0.0126,79) (1.8563,11269) (0.0843,545) (14.3786,100609) (0.2734,1873) (0.9046,6239) (8.3286,59558) (6.0052,45285) (13.1199,102604) (0.5101,2779) (0.5120,2825) (0.5065,2744) (0.5053,2740) (0.5016,2749) (0.5033,2773) (0.4927,2676) (0.5045,2748) (0.5089,2813) (0.4851,2669) (0.5077,2861) (1.1816,6666) (1.1446,6504) (1.1627,6649) (1.2014,6784) (1.2072,6790) (1.1691,6544) (1.1998,6577) (1.1607,6323) (1.1716,6506) (1.1983,6646) (1.1490,6341) (5.7796,37815) (5.6773,37374) (5.6404,38546) (5.4346,34788) (5.4033,35120) (5.3850,35097) (5.7024,36996) (5.7525,37363) (5.7204,36689) (5.4899,34319) (5.4268,36640) (11.7483,74101) (11.7629,77862) (11.1520,74625) (11.6373,75692) (11.5106,76318) (11.6718,78282) (11.4636,80072) (11.5023,79709) (11.6114,76077) (11.7049,77562) (11.7712,78006) (0.0000,3) (0.0000,5) (0.0000,6) (0.0001,10) (0.0002,14) (0.0004,8) (0.0007,18) (0.0012,45) (0.0018,19) (0.0027,38) (0.0040,39) (0.0057,55) (0.0078,72) (0.0106,103) (0.0141,131) (0.0184,170) (0.0238,215) (0.0303,269) (0.0382,340) (0.0477,431) (0.0589,534) (0.0721,655) (0.0875,792) (0.1055,945) (0.1263,1122) (0.1503,1362) (0.1778,1598)};
\end{axis}
\end{tikzpicture}
    \begin{tikzpicture}
\begin{axis}[
    width=6cm,
    height=6cm,
    xlabel={Proof size (GB)},
    ylabel={Verify time (ms)},
    mark options=solid,
    unbounded coords=discard,
]
\addplot+[mark=triangle,only marks] coordinates {(0.0000,1) (0.0001,5) (0.0005,16) (0.0022,31) (0.5067,6232) (0.1500,2485) (0.2321,4157) (0.2540,4523) (0.3220,5504) (0.4567,7552) (0.2903,4582) (0.5428,9257) (0.6432,10883) (0.5568,9777) (0.8962,15708) (0.8836,14341) (1.0824,17913) (0.7235,12017) (1.4287,25303) (0.8781,15095) (1.1862,19285) (1.0977,18120) (1.0670,18519) (1.3952,23097) (1.8891,31065) (2.7274,46442) (2.0596,32330) (2.1698,34038) (2.6812,42854) (0.0183,281) (0.0314,473) (0.0495,732) (0.0737,1109) (0.1050,1555) (0.1448,2098) (0.1948,2824) (0.2582,3642) (0.0000,0) (0.0002,4) (0.0003,5) (0.0030,45) (0.0100,137) (0.0224,303) (0.1100,1447) (0.0970,1291) (0.4285,5685) (0.4934,6574) (1.3845,18575) (1.7601,23569) (0.0000,1) (0.0001,2) (0.0001,3) (0.0011,20) (0.0021,32) (0.0057,83) (0.0225,323) (0.0114,188) (0.0603,978) (0.0385,706) (0.1702,2571) (0.0970,2018) (0.3399,5193) (0.2068,4762) (0.7479,11537) (0.0000,0) (0.0000,1) (0.0000,1) (0.0003,23) (0.0008,21) (0.0018,36) (0.0084,129) (0.0069,106) (0.0466,693) (0.0527,915) (0.1252,2106) (0.1566,2607) (0.1983,4091) (0.1701,2947) (1.4445,22957) (0.7080,16633) (0.0959,1541) (0.7482,16147) (1.4408,38769) (1.3466,22486) (1.3147,21685) (0.7302,12120) (0.0135,197) (0.0000,0) (0.0000,1) (0.0000,1) (0.0000,1) (0.0000,1) (0.0000,1) (0.0000,1) (0.0000,2) (0.0000,2) (0.0000,2) (0.0000,2) (0.0000,2) (0.0000,0) (0.0000,2) (0.0000,0) (0.0000,1) (0.0000,3) (0.0000,1) (0.0000,3) (0.0000,3) (0.0000,4) (0.0000,4) (0.0000,4) (0.0000,2) (0.0000,2) (0.0000,6) (0.0000,6) (0.0000,6) (0.0000,5) (0.0000,5) (0.0001,4) (0.0001,6) (0.0001,3) (0.0001,5) (0.0001,3) (0.0001,4) (0.0001,4) (0.0001,8) (0.0001,3) (0.0001,4) (0.0001,4) (0.0001,7) (0.0001,5) (0.0001,6) (0.0001,5) (0.0001,6) (0.0001,7) (0.0002,5) (0.0002,7) (0.0002,5) (0.0002,6) (0.0002,5) (0.0002,6) (0.0002,6) (0.0002,9) (0.0002,7) (0.0002,7) (0.0002,7) (0.0003,8) (0.0003,8) (0.0003,9) (0.0003,9) (0.0003,10) (0.0003,10) (0.0003,11) (0.0003,11) (0.0004,12) (0.0004,12) (0.0004,13) (0.0004,13) (0.0004,14) (0.0004,14) (0.0004,15) (0.0005,15) (0.0005,16) (0.0005,17) (0.0005,18) (0.0005,18) (0.0005,19) (0.0006,20) (0.0006,21) (0.0006,21) (0.0006,23) (0.0006,23) (0.0007,24) (0.0007,25) (0.0007,26) (0.0007,27) (0.0007,28) (0.0008,29) (0.0008,30) (0.0008,31) (0.0008,32) (0.0009,33) (0.0009,34) (0.0009,36) (0.0009,37) (0.0010,38) (0.0010,39) (0.0000,0) (0.0000,1) (0.0000,5) (0.0001,8) (0.0005,9) (0.0008,17) (0.0022,46) (0.0035,68) (0.0072,141) (0.0090,183) (0.0265,502) (0.0238,483) (0.0523,1025) (0.0742,1435) (0.1311,2590) (0.1182,2394) (0.2817,5512) (0.2311,4651) (0.5654,11768) (0.5751,11058) (0.8699,18415) (0.7375,14865) (1.9982,40952) (1.3156,25928) (2.2779,49015) (3.0048,56982) (2.2735,41932) (2.2449,41022) (2.7297,52232) (2.5075,47739) (2.8377,56028) (2.3728,45240) (3.0201,50514) (2.3921,44384) (2.5901,47923) (2.7534,56534) (0.0000,1) (0.0001,2) (0.0002,4) (0.0005,16) (0.0010,15) (0.0018,24) (0.0029,39) (0.0047,62) (0.0072,95) (0.0106,140) (0.0153,203) (0.0215,287) (0.0295,397) (0.0397,537) (0.0526,722) (0.0686,997) (0.0883,1219) (0.1122,1556) (0.1411,1968) (0.1757,2459) (0.2166,3045) (0.2650,3725) (0.3215,4557) (0.3873,5544) (0.4635,6629) (0.5512,7937) (0.6517,9427) (0.0004,7) (0.0016,24) (0.0036,58) (0.0064,109) (0.0099,182) (0.0142,278) (0.0192,401) (0.0251,553) (0.0317,739) (0.0391,960) (0.0472,1223) (0.0561,1522) (0.0658,1894) (0.0763,2266) (0.0875,2708) (0.0995,3209) (0.1123,3755) (0.1258,4390) (0.1401,5049) (0.1552,5779) (0.1711,6683) (0.1877,7469) (0.2051,8431) (0.2232,9577) (0.2422,10656) (0.0009,19) (0.0060,83) (0.0193,259) (0.0457,652) (0.0939,1209) (0.0097,132) (0.0318,421) (0.0750,975) (0.1492,1911) (0.0000,1) (0.0000,2) (0.0000,3) (0.0000,5) (0.0001,13) (0.0002,18) (0.0004,24) (0.0005,11) (0.0009,31) (0.0011,19) (0.0016,52) (0.0026,48) (0.0037,65) (0.0046,82) (0.0059,108) (0.0066,118) (0.0082,149) (0.0091,166) (0.0113,207) (0.0125,231) (0.0159,282) (0.0203,372) (0.0237,449) (0.0296,533) (0.0313,567) (0.0358,638) (0.0433,800) (0.0458,840) (0.0714,1318) (0.0750,1355) (0.0828,1508) (0.1001,1841) (0.1145,2082) (0.1247,2296) (0.1302,2416) (0.1426,2620) (0.1470,2823) (0.1791,3293) (0.1858,3396) (0.2229,4102) (0.2470,4539) (0.2555,4720) (0.2745,5085) (0.3008,5579) (0.3404,6254) (0.3618,6756) (0.3726,6875) (0.3952,7341) (0.4306,8005) (0.4683,8670) (0.5210,9630) (0.5496,10214) (0.5634,10407) (0.0000,1) (0.0001,4) (0.0004,8) (0.0032,38) (0.0003,7) (0.0003,25) (0.0021,35) (0.0022,32) (0.0071,98) (0.0066,94) (0.0087,120) (0.0077,108) (0.0152,208) (0.0161,220) (0.0299,407) (0.0292,396) (0.0497,671) (0.0490,661) (0.0583,785) (0.0596,797) (0.0770,1033) (0.0766,1028) (0.1198,1594) (0.1128,1512) (0.1654,2217) (0.1661,2205) (0.2315,3063) (0.2304,3046) (0.2024,2673) (0.1845,2482) (0.4342,5794) (0.4582,6083) (0.8625,11526) (0.8226,11042) (1.7091,22965) (1.7234,23161) (1.7230,23146) (1.7547,23589) (1.7259,23203) (1.4792,19615) (1.3980,18747) (2.4241,33005) (2.4467,33325) (2.4499,35974) (2.4809,33772) (2.4124,32704) (2.2667,30448) (2.2595,30634) (0.0000,0) (0.0000,0) (0.0004,13) (0.0009,13) (0.0452,715) (0.0049,78) (0.5279,10451) (0.0272,443) (2.2738,38423) (0.0769,1246) (0.2043,3235) (1.1528,18746) (0.9906,15876) (1.8863,28479) (0.1309,2073) (0.1712,2917) (0.1187,1834) (0.1228,1930) (0.1425,2423) (0.1510,2589) (0.1215,1953) (0.1212,1913) (0.1233,1932) (0.1477,2297) (0.1922,3498) (0.4048,7497) (0.4135,7566) (0.3316,5687) (0.3908,6792) (0.3946,6892) (0.3108,5470) (0.2708,4367) (0.3258,5527) (0.3581,6565) (0.3975,6902) (0.4536,8419) (1.1278,18716) (1.6217,28946) (1.6262,28676) (1.0437,16506) (1.3202,20860) (1.0929,17399) (1.0031,16533) (1.2089,19787) (0.9836,15793) (1.3917,23862) (1.8114,35003) (2.0905,34654) (2.0528,34483) (2.1777,37936) (1.7879,29291) (3.1538,58013) (1.8702,31204) (3.0478,54683) (3.0733,55771) (1.8083,29387) (1.9517,32342) (2.0584,34927) (0.0000,1) (0.0000,2) (0.0000,4) (0.0001,7) (0.0002,14) (0.0003,6) (0.0006,26) (0.0009,17) (0.0014,24) (0.0021,28) (0.0031,45) (0.0043,57) (0.0059,79) (0.0080,107) (0.0105,141) (0.0137,183) (0.0176,235) (0.0224,301) (0.0280,378) (0.0348,466) (0.0428,579) (0.0522,702) (0.0632,851) (0.0759,1025) (0.0905,1222) (0.1073,1463) (0.1265,1732)};
\end{axis}
\end{tikzpicture}
  \end{center}
  \caption{Breakdown of execution time against proof size. The left plot
    compares prover execution time against the unpruned proof size, while the right
    plot compares verifier execution time against the final proof size.}
  \label{fig:proof}
  \end{figure}

  Before committing the generated proof, the prover prunes any proof steps
  which are not necessary for deriving the final polynomial. Figure
  \ref{fig:proofprune} shows the relation between the unpruned and pruned proof
  sizes, as well as the time taken for the pruning operation. Considering that
  the pruning operation takes a small portion of the total proof
  generation time, and that it often reduced the generated proof size by more
  than 50\%, it appears to be a worthwhile addition to the prover
  implementation. Furthermore, this indicates that there is room for
  improvement in the proof construction implementation itself, as a significant
  portion of the constructed proof is redundant in many cases.
  \begin{figure}[tbp]
  \begin{center}
    \begin{tikzpicture}
\begin{axis}[
    width=6cm,
    height=6cm,
    xlabel={Unpruned proof size (bytes)},
    ylabel={Pruned proof size ratio},
    xmode=log,
    mark options=solid,
    unbounded coords=discard,
]
\addplot+[mark=x,only marks] coordinates {(22333,0.6528) (151510,0.7000) (708951,0.6763) (9068306,0.2472) (9903193391,0.0512) (328176592,0.4570) (398361230,0.5826) (430490936,0.5900) (530208878,0.6072) (810703672,0.5633) (942694764,0.3079) (1022698616,0.5308) (1284678784,0.5007) (1289644560,0.4317) (1313079118,0.6825) (1457357616,0.6063) (2140432380,0.5057) (1868273372,0.3872) (3005539216,0.4754) (2504057946,0.3507) (3432527484,0.3456) (2053396828,0.5346) (4455552436,0.2395) (5204868406,0.2681) (5906211922,0.3198) (5409863990,0.5041) (7775953978,0.2649) (10211680526,0.2125) (4846564200,0.5532) (38741331,0.4735) (89654897,0.3500) (207128547,0.2391) (483346153,0.1525) (1138139299,0.0923) (2688950161,0.0538) (6339896595,0.0307) (14863258985,0.0174) (9246,0.6469) (245440,0.8780) (304511,0.8225) (3277878,0.9151) (10668571,0.9390) (23515428,0.9505) (113634521,0.9676) (101107536,0.9596) (438639527,0.9768) (508140575,0.9711) (1407489929,0.9837) (1800627372,0.9775) (9246,0.6469) (125236,0.7855) (170859,0.8071) (1493261,0.7446) (2780756,0.7448) (6603135,0.8595) (27320878,0.8251) (14349645,0.7947) (68557206,0.8801) (51063399,0.7539) (198814121,0.8558) (121232712,0.8004) (379766352,0.8950) (272610152,0.7587) (861899320,0.8678) (1520,0.5664) (8866,0.7924) (49230,0.6509) (347690,0.7674) (935665,0.8629) (2194433,0.8304) (41311588,0.2039) (14771426,0.4653) (343869199,0.1354) (297272881,0.1774) (1046431254,0.1196) (1564348564,0.1001) (1120117284,0.1770) (1639725303,0.1038) (14773219268,0.0978) (4481456655,0.1580) (103938801,0.9224) (7781211188,0.0962) (10049713794,0.1434) (16284342723,0.0827) (20308148538,0.0647) (767101603,0.9518) (87691896,0.1543) (211,0.6019) (615,0.4699) (1326,0.3959) (2420,0.3467) (3973,0.3108) (6061,0.2833) (8760,0.2613) (12146,0.2433) (16295,0.2282) (21283,0.2154) (27186,0.2044) (34080,0.1948) (42041,0.1864) (51145,0.1789) (61468,0.1722) (73086,0.1662) (86075,0.1608) (100511,0.1559) (116470,0.1514) (134028,0.1472) (153261,0.1435) (174245,0.1400) (197056,0.1367) (221770,0.1337) (248463,0.1309) (277211,0.1283) (308090,0.1259) (341176,0.1236) (376545,0.1215) (414273,0.1195) (454436,0.1175) (497110,0.1157) (542371,0.1140) (590295,0.1124) (640958,0.1109) (694436,0.1095) (750805,0.1081) (810141,0.1068) (872520,0.1055) (938018,0.1043) (1006711,0.1032) (1078675,0.1021) (1153986,0.1010) (1232720,0.1000) (1314953,0.0990) (1400761,0.0981) (1490220,0.0972) (1583406,0.0964) (1680395,0.0955) (1781263,0.0948) (1886086,0.0940) (1994940,0.0933) (2107901,0.0925) (2225045,0.0919) (2346448,0.0912) (2472186,0.0906) (2602335,0.0899) (2736971,0.0893) (2876170,0.0888) (3020008,0.0882) (3168561,0.0876) (3321905,0.0871) (3480116,0.0866) (3643270,0.0861) (3811443,0.0856) (3984711,0.0851) (4163150,0.0847) (4346836,0.0842) (4535845,0.0838) (4730253,0.0834) (4930136,0.0830) (5135570,0.0826) (5346631,0.0822) (5563395,0.0818) (5785938,0.0815) (6014336,0.0811) (6248665,0.0807) (6489001,0.0804) (6735420,0.0801) (6987998,0.0797) (7246811,0.0794) (7511935,0.0791) (7783446,0.0788) (8061420,0.0785) (8345933,0.0782) (8637061,0.0779) (8934880,0.0777) (9239466,0.0774) (9550895,0.0771) (9869243,0.0769) (10194586,0.0766) (10527000,0.0764) (10866561,0.0761) (11213345,0.0759) (11567428,0.0756) (11928886,0.0754) (12297795,0.0752) (12674231,0.0750) (13058270,0.0748) (1326,0.3959) (15435,0.5375) (55232,0.8443) (159511,0.8795) (638744,0.7450) (833539,0.9292) (2727614,0.7898) (4379067,0.8098) (9482474,0.7622) (9360468,0.9609) (40740415,0.6494) (24544292,0.9709) (68578340,0.7629) (115108708,0.6448) (191709938,0.6839) (120289136,0.9824) (479553174,0.5875) (234432690,0.9858) (915160292,0.6179) (880583739,0.6531) (1046198065,0.8315) (744723512,0.9903) (3689934645,0.5415) (1947022038,0.6757) (2902374583,0.7848) (5752868405,0.5223) (5803515612,0.3918) (5845346822,0.3841) (5744208997,0.4752) (5781138374,0.4337) (5750478294,0.4935) (5825124007,0.4073) (8088997469,0.3734) (5832897522,0.4101) (5818867074,0.4451) (5937833275,0.4637) (51618,0.6491) (158560,0.6296) (393474,0.6184) (845648,0.6109) (1638196,0.6053) (2933448,0.6009) (4938340,0.5973) (7909804,0.5943) (12160158,0.5917) (18062496,0.5896) (26056078,0.5876) (36651720,0.5859) (50437184,0.5844) (68082568,0.5831) (90345696,0.5819) (118077508,0.5808) (152227450,0.5798) (193848864,0.5789) (244104378,0.5781) (304271296,0.5773) (375746988,0.5766) (460054280,0.5759) (558846844,0.5753) (673914588,0.5747) (807189046,0.5742) (960748768,0.5737) (1136824710,0.5732) (1051197,0.4037) (4923998,0.3320) (12278883,0.2945) (23768652,0.2679) (40046105,0.2470) (61764042,0.2296) (89575263,0.2149) (124132568,0.2021) (166088757,0.1908) (216096630,0.1808) (274808987,0.1718) (342878628,0.1637) (420958353,0.1563) (509700962,0.1496) (609759255,0.1435) (721786032,0.1379) (846434093,0.1326) (984356238,0.1278) (1136205267,0.1233) (1302633980,0.1192) (1484295177,0.1152) (1681841658,0.1116) (1895926223,0.1082) (2127201672,0.1049) (2376320805,0.1019) (1464986,0.5881) (15650430,0.3836) (101407000,0.1905) (614665234,0.0743) (3615700812,0.0260) (22557991,0.4292) (147410635,0.2156) (891951519,0.0841) (5232954019,0.0285) (2631,0.6389) (11827,0.6738) (28082,0.8779) (55139,0.9052) (159511,0.8795) (215345,0.9325) (418851,0.9485) (554504,0.9535) (942255,0.9354) (1109208,0.9613) (1646923,0.9668) (2703358,0.9714) (3872459,0.9514) (4707587,0.9757) (6009795,0.9775) (6759104,0.9783) (8401159,0.9797) (9327591,0.9805) (11647800,0.9707) (12832553,0.9762) (16174838,0.9835) (20588418,0.9851) (24013790,0.9858) (30259482,0.9782) (31713537,0.9866) (36263056,0.9873) (43764354,0.9883) (46378047,0.9883) (72097354,0.9899) (75727617,0.9900) (83608114,0.9904) (100990028,0.9910) (115471568,0.9914) (125753669,0.9917) (131250691,0.9918) (144590488,0.9866) (148178000,0.9920) (180493443,0.9925) (187220128,0.9926) (224409187,0.9931) (248629978,0.9933) (257215303,0.9934) (277403652,0.9894) (302707656,0.9938) (342446430,0.9940) (363902605,0.9941) (374800933,0.9941) (397505417,0.9943) (432961430,0.9944) (470801332,0.9946) (523766986,0.9948) (552413805,0.9949) (566330356,0.9949) (37794,0.5843) (177349,0.5933) (700110,0.5481) (21990516,0.1440) (332317,0.9567) (311658,0.9566) (2199918,0.9737) (2215315,0.9716) (7224984,0.9816) (6774120,0.9808) (9921777,0.8727) (8895327,0.8653) (15380988,0.9850) (16334845,0.9855) (30277519,0.9877) (29594166,0.9880) (50260882,0.9891) (49501962,0.9898) (63226742,0.9222) (64418888,0.9252) (77651503,0.9911) (77304132,0.9911) (120738633,0.9919) (113760816,0.9920) (166623495,0.9928) (167258298,0.9928) (233045367,0.9934) (231969376,0.9932) (226246361,0.8946) (195518166,0.9438) (521517029,0.8326) (500561557,0.9153) (1000499559,0.8621) (888328741,0.9260) (1714849674,0.9966) (1729395527,0.9966) (1729072279,0.9965) (1761007343,0.9964) (1731951412,0.9965) (1576479915,0.9383) (1497348693,0.9336) (2431651215,0.9969) (2454225607,0.9969) (2457492786,0.9969) (2488554034,0.9969) (2419755647,0.9970) (2543894947,0.8910) (2532002253,0.8924) (6061,0.2833) (35376,0.5730) (1022850,0.3992) (2332782,0.3761) (170488216,0.2649) (12595080,0.3929) (1856273893,0.2844) (84333923,0.3223) (14378558807,0.1581) (273393268,0.2814) (904617681,0.2258) (8328557236,0.1384) (6005197027,0.1650) (13119934030,0.1438) (510089209,0.2567) (511998072,0.3343) (506476278,0.2343) (505297525,0.2430) (501613401,0.2841) (503270472,0.3000) (492672733,0.2467) (504544519,0.2402) (508911142,0.2423) (485099188,0.3045) (507685064,0.3786) (1181598755,0.3426) (1144606891,0.3613) (1162718263,0.2852) (1201390360,0.3253) (1207231811,0.3269) (1169063767,0.2659) (1199803047,0.2257) (1160694462,0.2807) (1171621157,0.3056) (1198258448,0.3317) (1148989486,0.3947) (5779648542,0.1951) (5677289160,0.2856) (5640361040,0.2883) (5434575867,0.1920) (5403266975,0.2443) (5385009891,0.2029) (5702373116,0.1759) (5752455904,0.2102) (5720408313,0.1719) (5489861202,0.2535) (5426818843,0.3338) (11748268835,0.1779) (11762946134,0.1745) (11151966583,0.1953) (11637294422,0.1536) (11510571606,0.2740) (11671777027,0.1602) (11463586967,0.2659) (11502342765,0.2672) (11611366309,0.1557) (11704941342,0.1667) (11771163726,0.1749) (4648,0.5577) (18532,0.6775) (47031,0.7912) (116240,0.7151) (224678,0.7910) (417035,0.7901) (720228,0.7868) (1175209,0.7824) (1831630,0.7775) (2748641,0.7724) (3995688,0.7674) (5653311,0.7625) (7813942,0.7578) (10582703,0.7532) (14078204,0.7489) (18433341,0.7449) (23796094,0.7410) (30330325,0.7373) (38216576,0.7338) (47652867,0.7305) (58855494,0.7274) (72059827,0.7244) (87521108,0.7216) (105515249,0.7189) (126339630,0.7163) (150313897,0.7139) (177780760,0.7115)};
\end{axis}
\end{tikzpicture}
    \begin{tikzpicture}
\begin{axis}[
    width=6cm,
    height=6cm,
    xlabel={Prove time (ms)},
    ylabel={Prune time ratio},
    xmode=log,
    mark options=solid,
    unbounded coords=discard,
]
\addplot+[mark=x,only marks] coordinates {(3,1.0000) (5,0.6000) (33,0.1515) (94,0.0957) (83660,0.1209) (2581,0.0852) (3406,0.0907) (3643,0.0840) (4418,0.0835) (6531,0.0896) (7412,0.0529) (8211,0.0770) (10992,0.0758) (10586,0.0691) (12294,0.0813) (12803,0.0759) (18248,0.0743) (16199,0.0593) (28539,0.0894) (22925,0.0600) (31651,0.0692) (18831,0.0698) (40225,0.0494) (48740,0.0632) (57593,0.0893) (52555,0.1258) (72510,0.0787) (92347,0.0785) (51138,0.1197) (367,0.0872) (895,0.0659) (2005,0.0519) (4619,0.0446) (10866,0.0360) (25287,0.0312) (63756,0.0457) (155451,0.0657) (3,1.0000) (6,0.5000) (8,0.5000) (46,0.1522) (92,0.1739) (197,0.1523) (951,0.1504) (830,0.1349) (3709,0.1316) (4339,0.1288) (12180,0.1287) (16019,0.1220) (4,1.0000) (10,0.4000) (7,0.4286) (48,0.1250) (34,0.1765) (64,0.1719) (242,0.1281) (125,0.1440) (620,0.1435) (480,0.1375) (1915,0.1154) (1213,0.1179) (3536,0.1075) (2757,0.0932) (8216,0.1059) (3,1.0000) (4,0.7500) (5,0.6000) (11,0.3636) (38,0.1316) (39,0.1538) (320,0.1062) (122,0.1311) (2320,0.0539) (1980,0.0581) (7098,0.0425) (10668,0.0393) (7884,0.0460) (11457,0.0363) (117212,0.1385) (33398,0.0991) (878,0.1481) (58042,0.0470) (80894,0.0854) (124446,0.0509) (157547,0.0705) (6440,0.1311) (653,0.0735) (3,1.0000) (4,0.7500) (4,0.7500) (3,1.0000) (4,0.7500) (3,1.0000) (3,1.0000) (4,0.7500) (5,0.6000) (5,0.6000) (6,0.6667) (6,0.5000) (4,0.7500) (7,0.5714) (10,0.6000) (8,0.7500) (5,0.6000) (9,0.3333) (6,0.5000) (11,0.3636) (12,0.2500) (13,0.3077) (14,0.2857) (6,0.5000) (6,0.5000) (18,0.2222) (19,0.2105) (21,0.1905) (22,0.1818) (24,0.1667) (26,0.1538) (27,0.1481) (21,0.1905) (31,0.1290) (24,0.1667) (16,0.2500) (15,0.2000) (28,0.1429) (21,0.1905) (16,0.2500) (16,0.1875) (24,0.1667) (29,0.1379) (32,0.1250) (32,0.1250) (23,0.1739) (41,0.0976) (28,0.1429) (28,0.1429) (21,0.1905) (28,0.1429) (28,0.1429) (27,0.1481) (24,0.1667) (31,0.1290) (35,0.1143) (28,0.1429) (28,0.1429) (29,0.1379) (31,0.1290) (35,0.1143) (38,0.1053) (37,0.1081) (37,0.1081) (45,0.0889) (39,0.1026) (42,0.0952) (46,0.0870) (44,0.0909) (46,0.0870) (47,0.1064) (51,0.0980) (52,0.0962) (53,0.0943) (56,0.0893) (58,0.0862) (59,0.0847) (61,0.0820) (63,0.0952) (66,0.0909) (68,0.0882) (77,0.0649) (73,0.0822) (75,0.0800) (78,0.0769) (79,0.0759) (82,0.0732) (85,0.0706) (87,0.0690) (90,0.0667) (95,0.0632) (96,0.0625) (98,0.0612) (100,0.0600) (104,0.0577) (108,0.0648) (111,0.0631) (114,0.0614) (116,0.0603) (3,1.0000) (3,1.0000) (6,0.6667) (5,0.6000) (9,0.3333) (12,0.3333) (32,0.1875) (42,0.1667) (75,0.1467) (82,0.1707) (311,0.1125) (212,0.1557) (564,0.1312) (901,0.1121) (1496,0.1056) (1021,0.1450) (3723,0.0854) (1973,0.1358) (7255,0.0838) (7059,0.0902) (8639,0.1060) (6399,0.1277) (31254,0.1040) (16016,0.0910) (26181,0.1414) (43803,0.2324) (40648,0.1504) (41448,0.1469) (43136,0.2171) (41096,0.1526) (43178,0.2068) (41205,0.1507) (56401,0.1459) (40588,0.1639) (41804,0.1769) (44499,0.2018) (4,0.7500) (6,0.5000) (8,0.3750) (42,0.1905) (36,0.1389) (47,0.1277) (64,0.1250) (78,0.1410) (124,0.1290) (187,0.1230) (248,0.1250) (348,0.1351) (483,0.1470) (658,0.1368) (873,0.1271) (1132,0.1263) (1461,0.1232) (1888,0.1186) (2338,0.1180) (2937,0.1171) (3758,0.1128) (4598,0.1111) (5814,0.1060) (6733,0.1132) (8106,0.1126) (10337,0.1063) (11339,0.1147) (13,0.3077) (64,0.0938) (105,0.0952) (195,0.0769) (341,0.1026) (535,0.0804) (801,0.0699) (1121,0.0607) (1502,0.0546) (1984,0.0489) (2543,0.0460) (3230,0.0427) (4004,0.0407) (5037,0.0381) (6026,0.0368) (7178,0.0337) (8473,0.0348) (9878,0.0345) (11461,0.0330) (13213,0.0321) (15135,0.0316) (17176,0.0311) (19492,0.0303) (21942,0.0301) (24549,0.0321) (22,0.1818) (162,0.0802) (837,0.0717) (5178,0.0413) (30371,0.0541) (197,0.0914) (1267,0.0624) (7486,0.0381) (44522,0.0515) (3,1.0000) (5,0.6000) (8,0.8750) (6,0.5000) (11,0.3636) (13,0.3077) (21,0.2381) (9,0.4444) (19,0.2632) (13,0.3077) (30,0.2000) (25,0.2400) (39,0.1795) (42,0.1905) (52,0.1731) (57,0.1754) (71,0.1690) (78,0.1667) (104,0.1827) (110,0.1545) (132,0.1515) (169,0.1538) (197,0.1523) (257,0.1401) (260,0.1577) (297,0.1481) (368,0.1766) (398,0.1759) (587,0.1465) (625,0.1600) (685,0.1562) (823,0.1555) (946,0.1512) (1026,0.1501) (1085,0.1622) (1236,0.1416) (1217,0.1454) (1500,0.1547) (1559,0.1430) (1881,0.1388) (2103,0.1526) (2115,0.1390) (2435,0.1298) (2576,0.1712) (2836,0.1365) (3045,0.1353) (3117,0.1347) (3303,0.1368) (3591,0.1351) (3882,0.1342) (4508,0.1686) (4557,0.1339) (4729,0.1413) (4,0.7500) (6,0.5000) (12,0.3333) (218,0.0688) (7,0.7143) (16,0.3125) (21,0.2381) (20,0.2500) (66,0.1818) (61,0.2131) (86,0.1860) (75,0.1600) (126,0.1587) (134,0.1642) (250,0.1440) (242,0.1488) (432,0.1713) (423,0.1702) (531,0.1582) (543,0.1565) (631,0.1616) (625,0.1616) (980,0.1500) (908,0.1553) (1355,0.1446) (1365,0.1429) (1918,0.1382) (1907,0.1311) (1902,0.1325) (1604,0.1378) (4488,0.1404) (4298,0.1438) (8491,0.1149) (7542,0.1208) (14571,0.1250) (14825,0.1350) (14743,0.1404) (15166,0.1333) (14645,0.1322) (13752,0.1175) (13158,0.1162) (20735,0.1229) (21206,0.1276) (21051,0.1186) (22805,0.1631) (20756,0.1215) (24239,0.2005) (24255,0.2013) (4,1.0000) (4,0.7500) (37,0.1351) (35,0.1429) (958,0.0814) (79,0.1266) (11269,0.0566) (545,0.0881) (100609,0.0857) (1873,0.0822) (6239,0.0410) (59558,0.0375) (45285,0.0319) (102604,0.0300) (2779,0.0677) (2825,0.0782) (2744,0.0649) (2740,0.0661) (2749,0.0713) (2773,0.0685) (2676,0.0617) (2748,0.0659) (2813,0.0711) (2669,0.0727) (2861,0.0905) (6666,0.0774) (6504,0.0801) (6649,0.0826) (6784,0.0786) (6790,0.0754) (6544,0.0604) (6577,0.0537) (6323,0.0783) (6506,0.0666) (6646,0.0713) (6341,0.0793) (37815,0.0975) (37374,0.1406) (38546,0.1437) (34788,0.1255) (35120,0.1369) (35097,0.1156) (36996,0.1034) (37363,0.1173) (36689,0.1150) (34319,0.1456) (36640,0.1487) (74101,0.1118) (77862,0.1089) (74625,0.1076) (75692,0.0809) (76318,0.1132) (78282,0.0982) (80072,0.1342) (79709,0.1252) (76077,0.1050) (77562,0.1018) (78006,0.1019) (3,1.0000) (5,0.6000) (6,0.5000) (10,0.4000) (14,0.2857) (8,0.5000) (18,0.2222) (45,0.1111) (19,0.2632) (38,0.1579) (39,0.1795) (55,0.1636) (72,0.1528) (103,0.1553) (131,0.1450) (170,0.1471) (215,0.1395) (269,0.1338) (340,0.1412) (431,0.1531) (534,0.1479) (655,0.1420) (792,0.1364) (945,0.1354) (1122,0.1319) (1362,0.1424) (1598,0.1245)};
\end{axis}
\end{tikzpicture}
  \end{center}
  \caption{The left plot shows the relation between the pruned and unpruned proof sizes.
    The right plot shows the time taken for the pruning operation against the
    total proof generation time.}
  \label{fig:proofprune}
  \end{figure}

	\section{Conclusions and further work}

  We have introduced a proof system for certifying the membership of a
  polynomial in a given multivariate polynomial ideal and have shown its
  correctness. For the problem of certifying the generating set of the
  automorphism group of a graph, we have presented a reduction to finding the
  Nullstellensatz certificate of a multivariate polynomial system over a finite
  field. Finally, we have come up with an explicit construction of such a
  certificate within the proof system. Based on this construction, we have
  developed a prototype implementation of a modification of the McKay-Piperno
  algorithm for constructing the proof, as well as an independent
  implementation of a proof verification procedure.

  An advantage of the approach outlined in the paper is that the proof
  verification procedure relies on straightforward polynomial operations and
  is decoupled from the specifics of the algorithm used to find the
  automorphisms. The downsides lie in the specific construction of the
  certificate which leads to very lengthy proofs for even moderately sized
  graphs. An obvious direction for further work is to come up with more
  efficient constructions. A path which seems likely to yield results is to
  find ways to eliminate the need to construct all of the polynomials
  referenced by Lemma \ref{lemma:monorefine}. Another reasonable direction
  would be to come up with constructions which work over the field
  $\mathbb{F}_2$, as this could potentially lead to significant performance
  gains. An additional interesting possibility is to explore combining an
  explicit proof construction with a general-purpose algorithm for finding
  Nullstellensatz certificates such as NulLA \cite{Nulla}, where some steps of
  the proof would be calculated using the general-purpose algorithm.

  Developing a formalized implementation of the proof verifier within an
  interactive theorem prover is another possible direction for further work.
  This would serve to eliminate any doubts about the correctness of the proof
  verifier implementation, and would further allow the automorphism group
  generation and verification to be used as primitives within formal proofs.

  It might be worth considering a prover implementation embedded within an
  existing solver down the line. This is, at present, not as practical as some
  of the other directions, considering that the primary bottleneck seems to be
  the proof generation time due to proof size. However, it could be a worthwhile
  endeavor in the future, assuming that more efficient proof constructions are
  discovered.

  Lastly, the status of the graph isomorphism problem within the complexity
  class $\textsc{coNP}$ remains an open question \cite{Babai2018}. Providing
  insight into the theoretical lower bounds for the sizes of the certificates
  could inform the search for more efficient proof constructions and their
  limitations.

  \section*{Acknowledgements}

  The author is grateful to Vesna Marinković for her guidance, suggestions, and
  repeated reviews of the manuscript throughout its preparation, and to Filip
  Marić and Milan Banković for reviewing early versions of the paper.

  The author is partially supported by the Ministry of  Science, Technological
  Development and Innovation, Republic of Serbia, through the project
  451-03-33/2026-03/200104.

  \section*{Declaration of generative AI use}

  During the preparation of this work the author used Claude (Anthropic), and
  ChatGPT and Codex (OpenAI) for tooling (i.e. helper Python scripts),
  mechanical edits to C++ code and manuscript source files, literature search,
  and reviews of the paper and the code. The author has closely reviewed and
  independently checked all AI-generated refactors and code, as well as all
  AI-suggested claims and references. The research, manuscript and
  implementation are the author's own and the author takes full responsibility
  for the content of the article and its accompanying artifacts.

	\bibliographystyle{alpha}
	\bibliography{references}

	\appendix

	\section{Proof of Lemma \ref{lemma:monorefine}}
  \label{app:proof}

	\begin{applem}[\textbf{\ref{lemma:monorefine}}]
    Let $\nu$ be a node of the search tree with partition mapping $\mu$.
    Polynomials of the form $(-1 + \sum_{v \in \mu[u]} x_{u, v}) M_\nu$ may be
    derived from $\mathcal{A}(G, a_1, \dots, a_m)$ for each vertex $u$, $(-1 +
    \sum_{u \in \mu^{-1}[v]} x_{u, v}) M_\nu$ may be derived for each vertex
    $v$, and $x_{u, v} M_\nu$ may be derived for each $u$ and $v$ such that $v
    \notin \mu[u]$.
	\end{applem}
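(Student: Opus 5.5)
We proceed by induction along the chain of partition mappings $\mu_0 \overset{i}{\mapsto} \mu_1 \overset{s}{\mapsto} \cdots \overset{s}{\mapsto} \mu$ that computes $P(G,\nu)$. Throughout, $M$ denotes the monomial of the node currently being processed. For each $\mu_i$ the induction hypothesis asserts that the following are derivable:
\begin{itemize}
\item $m_{i,u} = (-1+\sum_{v\in\mu_i[u]} x_{u,v})M$ for every vertex $u$;
\item $w_{i,v} = (-1+\sum_{u\in\mu_i^{-1}[v]} x_{u,v})M$ for every vertex $v$;
\item $e_{i,u,v} = x_{u,v}M$ whenever $v\notin\mu_i[u]$.
\end{itemize}
Only bijective steps occur except possibly the last one, so the hypothesis is stated for a chain of partition mappings.

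\textbf{Base case.} At the root $M_{[]}=1$ and $\mu_0$ maps $V$ to $V$. Then $m_{0,u}=\operatorname{RowS}_u$ and $w_{0,v}=\operatorname{ColS}_v$ are given by Tautology, and there are no exclusions to derive.

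\textbf{Individualization step.} Here $u_k$ is sent to $v_k=\sigma_{\mu_i}(u_k)$ and the monomial becomes $M'=x_{u_k,v_k}M$. Old exclusions are multiplied by $x_{u_k,v_k}$. The new exclusions are:
\begin{itemize}
\item $x_{u_k,v}M'$ for $v\neq v_k$, which is $\operatorname{RowNZ}_{u_k;v_k,v}\cdot M$;
\item $x_{u,v_k}M'$ for $u\neq u_k$, which is $\operatorname{ColNZ}_{u_k,u;v_k}\cdot M$.
\end{itemize}
For the new row polynomials, take $x_{u_k,v_k}m_{i,u}$ and subtract the new exclusion terms for $v\in\mu_i[u]\setminus\mu_{i+1}[u]$. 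When $u=u_k$, we have $x_{u_k,v_k}(-1+x_{u_k,v_k})M$, which is reduced using $\operatorname{Bin}_{u_k,v_k}\cdot M$ to give $(-1+x_{u_k,v_k})M'$. Column polynomials are handled symmetrically.

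\textbf{Split step (the main obstacle).} Fix the splitting cell $W$, a vertex $u\in C$, and $D=\mu_i(C)$. Summing $\operatorname{Iso}_{u,v}$ over $v\in\mu_i(W)$ gives
\[\sum_{t\in n_G(u)}\sum_{v\in\mu_i(W)}x_{t,v}-\sum_{t}x_{u,t}\deg(t,\mu_i(W)).\]
In the first sum, for each neighbour $t$ the inner sum equals $1$ modulo $M$ if $t\in W$ (use $m_{i,t}$ together with the exclusions $e_{i,t,v}$ for $v$ outside $\mu_i[t]$), and equals $0$ if $t\notin W$ (use the exclusions). In the second sum, terms with $t\notin D$ are killed by exclusions. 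Multiplying by $M$ therefore shows that
\[d_{u,W}M=\Big(\deg(u,W)-\sum_{v\in D}x_{u,v}\deg(v,\mu_i(W))\Big)M\]
is derivable.

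Now fix $v\in D$ and multiply $d_{u,W}M$ by $x_{u,v}$. Every product $x_{u,v}x_{u,v'}$ with $v'\neq v$ is reduced by RowNZ, and $x_{u,v}^2$ is reduced by Bin. This leaves
\[(\deg(u,W)-\deg(v,\mu_i(W)))\,x_{u,v}M.\]
When the coefficient is nonzero, it is invertible because its absolute value is below $p$. This gives the new exclusions, namely exactly the pairs separated by the split, as the definition of $\mu_{i+1}$ specifies. Finally, $m_{i+1,u}$ and $w_{i+1,v}$ are obtained from $m_{i,u}$ and $w_{i,v}$ by subtracting the newly derived exclusion monomials.

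The step most likely to hide errors is the bookkeeping modulo $M$ in the first sum of the split step. There the row identity for a neighbour $t$ must be used in its full-row form, $\sum_{v}x_{t,v}\equiv1$ restricted to $\mu_i[t]$, and the terms for $t\in W$ versus $t\notin W$ must be separated correctly. The same argument applies verbatim if the final mapping is partial, since partiality only empties some $\mu(C')$ and does not affect which polynomials are derivable.
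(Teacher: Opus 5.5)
Your proposal is correct and follows essentially the same route as the paper's proof. It uses induction along the individualization/split chain, with RowNZ, ColNZ and Bin for individualization, and in the split step derives $d_{u,W}M_\nu$ from $\sum_{v\in\mu_i(W)}\operatorname{Iso}_{u,v}$, multiplies by $x_{u,v}$, reduces via RowNZ and Bin, and inverts $\operatorname{deg}(u,W)-\operatorname{deg}(v,\mu_i(W))$ in $\mathbb{F}_p$. Only cosmetic remarks apply: for $u=u_k$ the polynomial $x_{u_k,v_k}(-1+x_{u_k,v_k})M$ already equals $(-1+x_{u_k,v_k})M'=M\operatorname{Bin}_{u_k,v_k}$, so no further reduction is needed, and for a neighbour $t\in W$ the polynomial $m_{i,t}$ alone suffices, since $\mu_i[t]=\mu_i(W)$.
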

	\begin{proof}
    Consider the partition mapping $\mu$ of node $\nu$. Starting from the root
    of the search tree, we may interpret the steps taken to calculate $\mu$ as
    a chain of individualization and refinement steps. Specifically, we
    consider the chain $\mu_0 \overset{i}{\mapsto} \mu_1
    \overset{s}{\mapsto} \dots \overset{s}{\mapsto} \mu_j
    \overset{i}{\mapsto} \mu_{j+1} \overset{s}{\mapsto} \dots
    \overset{s}{\mapsto} \mu$. We will show, using induction on the
    number of refinement and individualization steps, that the appropriate
    polynomials are derivable for each of the partition mappings in the given
    chain. We will use the notation $p = g_1 \underline{p_1} + \dots + g_m
    \underline{p_m}$ to denote that $p_1, \dots, p_m$ are assumed to be
    derivable and that the Ideal rule may be applied with coefficients $g_1,
    \dots, g_m$ to derive $p$.

    \begin{induction}

    \item[Base case] At the root of the search tree $\nu = []$ the
      bijective partition mapping maps the entire set of vertices to itself and no
      refinement or individualization steps have been performed. The
      $\operatorname{RowS}$ and $\operatorname{ColS}$ polynomials given in the
      system $\mathcal{A}(G, a_1, \dots, a_m)$ are exactly the polynomials from
      the statement and they are derivable by the Tautology rule.

    \item[Induction hypothesis] The polynomials $m_{i, u} = (-1 + \sum_{v \in
      \mu_i[u]} x_{u, v}) M_{\nu}$ and $w_{i, v} = (-1 + \sum_{u \in
      \mu_i^{-1}[v]} x_{u, v}) M_{\nu}$ are derivable for each vertex $u$ and
      $v$ in the bijective partition mapping $\mu_i:\mathbb{U}\to\mathbb{V}$ of node
      $\nu$, and the polynomials $e_{i, u, v} = x_{u, v} M_{\nu}$ are derivable
      for each pair of vertices $u$ and $v$ such that $v \notin \mu_i[u]$.

    \item[Induction step] We consider two different induction steps, one for
      individualization and one for refinement.

      (1) Assume $\mu_{i+1}$ is the partition mapping of $\nu' = [\nu, u_k]$ produced by
      individualizing vertex $u_k$ to vertex $v_k$ from $\mu_i$. Let $u_k \in U \in
      \mathbb{U}$ and $v_k \in V \in \mathbb{V}$. There are three cases to
      consider for deriving each of the three types of polynomials (Figure \ref{img:individualize}).
      \begin{figure}[H]
      \begin{center}
        \makeatletter
\newcommand{\myboxed}[1]{\mathpalette\MyBoxAux{#1}}
\newcommand{\MyBoxAux}[2]{%
  \tikz[baseline=(X.base)] 
    \node[draw, inner sep=1pt, anchor=base, text height=1.5ex, text depth=.25ex] 
    (X) {$\m@th#1#2$};%
}
\makeatother

\begin{tikzpicture}
  \tikzset{every label/.style={label distance=-1mm}}
  \begin{scope}[transform shape, scale=0.8]
    \node (polym) at (5, 1.2)[] {$m_{i+1}$};
    \node (a) at (0, 0)[label=90:(a)] {$\begin{smallmatrix}
      \dots & \{\ \myboxed{u}\ \} & \dots & \{u_k\} & \{\ \ \ \} & \dots & \{\ \ \ \} & \dots \\
      \dots & \myboxed{\{\ \ \ \}} & \dots & \{v_k\} & \{\ \ \ \} & \dots & \{\ \ \ \} & \dots
      \end{smallmatrix}$};
    \node (b) at (5, 0)[label=90:(b)] {$\begin{smallmatrix}
      \dots & \{\ \ \ \} & \dots & \{\myboxed{u_k}\} & \{\ \ \ \} & \dots & \{\ \ \ \} & \dots \\
      \dots & \{\ \ \ \} & \dots & \myboxed{\{v_k\}} & \{\ \ \ \} & \dots & \{\ \ \ \} & \dots
      \end{smallmatrix}$};
    \node (c) at (10, 0)[label=90:(c)] {$\begin{smallmatrix}
      \dots & \{\ \ \ \} & \dots & \{u_k\} & \{\ \myboxed{u} \ \} & \dots & \{\ \ \ \} & \dots \\
      \dots & \{\ \ \ \} & \dots & \{v_k\} & \myboxed{\{\ \ \ \}} & \dots & \{\ \ \ \} & \dots
      \end{smallmatrix}$};
    \node (polyw) at (5, -0.8)[] {$w_{i+1}$};
    \node (d) at (0, -2)[label=90:(d)] {$\begin{smallmatrix}
      \dots & \myboxed{\{\ \ \ \}}& \dots & \{u_k\} & \{\ \ \ \} & \dots & \{\ \ \ \} & \dots \\
      \dots & \{\ \myboxed{v} \ \} & \dots & \{v_k\} & \{\ \ \ \} & \dots & \{\ \ \ \} & \dots
      \end{smallmatrix}$};
    \node (e) at (5, -2)[label=90:(e)] {$\begin{smallmatrix}
      \dots & \{\ \ \ \} & \dots & \myboxed{\{u_k\}} & \{\ \ \ \} & \dots & \{\ \ \ \} & \dots \\
      \dots & \{\ \ \ \} & \dots & \{\myboxed{v_k}\} & \{\ \ \ \} & \dots & \{\ \ \ \} & \dots
      \end{smallmatrix}$};
    \node (f) at (10, -2)[label=90:(f)] {$\begin{smallmatrix}
      \dots & \{\ \ \ \} & \dots & \{u_k\} & \myboxed{\{\ \ \ \}} & \dots & \{\ \ \ \} & \dots \\
      \dots & \{\ \ \ \} & \dots & \{v_k\} & \{\ \myboxed{v} \ \} & \dots & \{\ \ \ \} & \dots
      \end{smallmatrix}$};
    \node (polye) at (5, -2.8)[] {$e_{i+1}$};
    \node (g) at (0, -4)[label=90:(g)] {$\begin{smallmatrix}
      \dots & \{\ \myboxed{u} \ \} & \dots & \{u_k\} & \{\ \ \ \} & \dots & \{\ \ \ \} & \dots \\
      \dots & \{\ \ \ \} & \dots & \{v_k\} & \{\ \ \ \} & \dots & \{\ \myboxed{v} \ \} & \dots
      \end{smallmatrix}$};
    \node (h) at (5, -4)[label=90:(h)] {$\begin{smallmatrix}
      \dots & \{\ \ \ \} & \dots & \{\myboxed{u_k}\} & \{\ \ \ \} & \dots & \{\ \ \ \} & \dots \\
      \dots & \{\ \ \ \} & \dots & \{v_k\} & \{\ \myboxed{v} \ \} & \dots & \{\ \ \ \} & \dots
      \end{smallmatrix}$};
    \node (i) at (10, -4)[label=90:(i)] {$\begin{smallmatrix}
      \dots & \{\ \ \ \} & \dots & \{u_k\} & \{\ \myboxed{u} \ \} & \dots & \{\ \ \ \} & \dots \\
      \dots & \{\ \ \ \} & \dots & \{\myboxed{v_k}\} & \{\ \ \ \} & \dots & \{\ \ \ \} & \dots
      \end{smallmatrix}$};
  \end{scope}
\end{tikzpicture}
      \end{center}
      \caption{The vertex $u_k$ is individualized to $v_k$. The partition
        mapping $\mu_i$ is shown, with a different combination of vertices and
        cells being highlighted in each one.}
      \label{img:individualize}
      \end{figure}

      The polynomials $m_{i+1, u}$ are derivable for each vertex $u$. Let
      $S^m_{i, u} = -1 + \sum_{v \in \mu_i[u]} x_{u, v}$.

        (a) If $u \notin U$ then
        \begin{align*}m_{i+1, u} &= S^m_{i+1, u} M_{\nu'} \\
          &= S^m_{i, u} M_{\nu'} \\
          &= S^m_{i, u} M_{\nu} x_{u_k, v_k} \\
          &= x_{u_k, v_k} \underline{m_{i, u}}
        \end{align*}

        (b) If $u = u_k$ then
        \begin{align*}m_{i+1, u} &= S^m_{i+1, u} M_{\nu'} \\
          &= (-1 + x_{u_k, v_k}) M_{\nu'} \\
          &= (-1 + x_{u_k, v_k}) M_{\nu} x_{u_k, v_k} \\
          &= M_{\nu} \underline{\operatorname{Bin}_{u_k, v_k}}
        \end{align*}

        (c) If $u \in U \setminus \{u_k\}$ then
        \begin{align*}m_{i+1, u} &= S^m_{i+1,u} M_{\nu'} \\
          &= (S^m_{i,u}-x_{u,v_k}) M_{\nu'} \\
          &= S^m_{i,u} M_{\nu'}-x_{u,v_k} M_{\nu'} \\
          &= S^m_{i,u} M_{\nu} x_{u_k, v_k} - e_{i+1, u, v_k} \\
          &= x_{u_k, v_k} \underline{m_{i, u}} - \underline{e_{i+1, u, v_k}}
        \end{align*}

      The polynomials $w_{i+1, v}$ are derivable for each vertex $v$. Let
      $S^w_{i, v} = -1 + \sum_{u \in \mu_i^{-1}[v]} x_{u, v}$.

        (d) If $v \notin V$ then
        \begin{align*}w_{i+1, v} &= S^w_{i+1, v} M_{\nu'} \\
          &= S^w_{i, v} M_{\nu'} \\
          &= S^w_{i, v} M_{\nu} x_{u_k, v_k} \\
          &= x_{u_k, v_k} \underline{w_{i, v}}
        \end{align*}

        (e) If $v = v_k$ then
        \begin{align*}w_{i+1, v} &= S^w_{i+1, v} M_{\nu'} \\
          &= (-1 + x_{u_k, v_k}) M_{\nu'} \\
          &= \underline{m_{i+1, u_k}}
        \end{align*}

        (f) If $v \in V \setminus \{v_k\}$ then
        \begin{align*}w_{i+1, v} &= S^w_{i+1,v} M_{\nu'} \\
          &= (S^w_{i,v}-x_{u_k,v}) M_{\nu'} \\
          &= S^w_{i,v} M_{\nu'}-x_{u_k,v} M_{\nu'} \\
          &= S^w_{i,v} M_{\nu} x_{u_k, v_k} - e_{i+1, u_k, v} \\
          &= x_{u_k, v_k} \underline{w_{i, v}} - \underline{e_{i+1, u_k, v}}
        \end{align*}

      The polynomials $e_{i+1, u, v}$ are derivable for each pair of vertices
      $u$ and $v$ such that $v \notin \mu_{i+1}[u]$.

        (g) If $u \notin U$ or $v \notin V$ then
        \begin{align*}e_{i+1, u, v} &= x_{u, v} M_{\nu'} \\
          &= x_{u, v} M_{\nu} x_{u_k, v_k} \\
          &= x_{u_k, v_k} \underline{e_{i, u, v}}
        \end{align*}

        (h) If $u = u_k$ and $v \in V \setminus \{v_k\}$ then
        \begin{align*}e_{i+1, u, v} &= x_{u_k, v} M_{\nu'} \\
          &= x_{u_k, v} M_{\nu} x_{u_k, v_k} \\
          &= M_{\nu} \underline{\operatorname{RowNZ}_{u_k; v, v_k}}
        \end{align*}

        (i) If $v = v_k$ and $u \in U \setminus \{u_k\}$ then
        \begin{align*}e_{i+1, u, v} &= x_{u, v_k} M_{\nu'} \\
          &= x_{u, v_k} M_{\nu} x_{u_k, v_k} \\
          &= M_{\nu} \underline{\operatorname{ColNZ}_{u, u_k; v_k}}
        \end{align*}

    (2) Assume $\mu_{i+1}$ is produced by splitting $\mu_i$ according to some
    cell $W \in \mathbb{U}$. We will prove a series of auxiliary statements
    that will help us derive the required polynomials for the partition mapping
    $\mu_{i+1}$.

    Let $I_{u, W} = \sum_{v \in \mu_{i}(W)} x_{u, v}$ and $S_{u, W} = I_{u, W}
    - 1$. If $u \in W$, then $S_{u, W} M_{\nu}$ is derivable since $S_{u, W}
    M_{\nu} = \underline{m_{i, u}}$. If $u \notin W$, then $I_{u, W} M_{\nu}$ is derivable
    since $I_{u, W} M_{\nu} = \sum_{v \in \mu_i(W)} x_{u, v} M_{\nu} = \sum_{v
    \in \mu_i(W)} \underline{e_{i, u, v}}$.

    We will now show two equalities related to the sum $\sum_{v \in \mu_i(W)} \operatorname{Iso}_{u, v}$.
    For any vertex $u$ we have:

    \begin{minipage}[t]{0.45\textwidth}
    \begin{align*}
      \sum_{\mathclap{v \in \mu_i(W)}} (AX)_{u, v} &= \smashoperator[l]{\sum_{v \in \mu_i(W)}} \sum_{t = 0}^{n-1} a_{u, t} x_{t, v} \\
      &= \sum_{t=0}^{n-1} a_{u, t} \sum_{\mathclap{v \in \mu_i(W)}} x_{t, v} \\
      &= \sum_{t=0}^{n-1} a_{u, t} I_{t, W} \\
      &= \sum_{\mathclap{t \in n_G(u)}} I_{t, W}
    \end{align*}
    \end{minipage}
    \hspace{0.05\textwidth}
    \begin{minipage}[t]{0.45\textwidth}
    \begin{align*}
      \sum_{\mathclap{v \in \mu_i(W)}} (XA)_{u, v} &= \smashoperator[l]{\sum_{v \in \mu_i(W)}} \sum_{t=0}^{n-1} x_{u, t} a_{t, v} \\
      &= \sum_{t=0}^{n-1} x_{u, t} \sum_{\mathclap{v \in \mu_i(W)}} a_{t, v} \\
      &= \sum_{t=0}^{n-1} x_{u, t} \operatorname{deg}(t, \mu_i(W))
    \end{align*}
    \end{minipage}

    Let $d_{u, W} = \operatorname{deg}(u, W) - \sum_{v \in \mu_i[u]} x_{u, v}
    \operatorname{deg}(v, \mu_i(W))$. For any vertex $u$ we show that $d_{u, W}
    M_{\nu}$ is derivable:

    \begin{align*}
       & M_{\nu}\sum_{\mathclap{v \in \mu_i(W)}} \underline{\operatorname{Iso}_{u, v}} - \sum_{\mathclap{t \in n_G(u) \setminus W}} \underline{I_{t, W} M_{\nu}} - \sum_{\mathclap{t \in n_G(u) \cap W}} \underline{S_{t, W} M_{\nu}} + \sum_{\mathclap{v \notin \mu_i[u]}} \operatorname{deg}(v, \mu_i(W)) \underline{e_{i, u, v}}\\
      =& M_{\nu}\sum_{\mathclap{v \in \mu_i(W)}} \operatorname{Iso}_{u, v} - \sum_{\mathclap{t \in n_G(u) \setminus W}} I_{t, W} M_{\nu} - \sum_{\mathclap{t \in n_G(u) \cap W}} S_{t, W} M_{\nu} + \sum_{\mathclap{v \notin \mu_i[u]}} \operatorname{deg}(v, \mu_i(W)) M_\nu x_{u, v}\\
      =& M_{\nu}\sum_{\mathclap{v \in \mu_i(W)}} \operatorname{Iso}_{u, v} - \sum_{\mathclap{t \in n_G(u)}} I_{t, W} M_\nu + \sum_{\mathclap{t \in n_G(u) \cap W}} M_\nu + \sum_{\mathclap{v \notin \mu_i[u]}} \operatorname{deg}(v, \mu_i(W)) M_\nu x_{u, v} \\
      =& M_{\nu}\sum_{\mathclap{v \in \mu_i(W)}} (AX-XA)_{u, v} - \sum_{\mathclap{t \in n_G(u)}} I_{t, W} M_\nu + \sum_{\mathclap{t \in n_G(u) \cap W}} M_\nu + \sum_{\mathclap{v \notin \mu_i[u]}} \operatorname{deg}(v, \mu_i(W)) M_\nu x_{u, v} \\
      =& M_{\nu}\sum_{\mathclap{v \in \mu_i(W)}} -(XA)_{u, v} + \sum_{\mathclap{t \in n_G(u) \cap W}} M_\nu + \sum_{\mathclap{v \notin \mu_i[u]}} \operatorname{deg}(v, \mu_i(W)) M_\nu x_{u, v} \\
      =& -M_{\nu}\sum_{\mathclap{t=0}}^{n-1} x_{u, t} \operatorname{deg}(t, \mu_i(W)) + \sum_{\mathclap{t \in n_G(u) \cap W}} M_\nu + \sum_{\mathclap{v \notin \mu_i[u]}} \operatorname{deg}(v, \mu_i(W)) M_\nu x_{u, v} \\
      =& \sum_{\mathclap{t \in n_G(u) \cap W}} M_\nu - M_{\nu}\sum_{\mathclap{v \in \mu_i[u]}} x_{u, v} \operatorname{deg}(v, \mu_i(W)) \\
      =& M_\nu \operatorname{deg}(u, W) - M_{\nu}\sum_{\mathclap{v \in \mu_i[u]}} x_{u, v} \operatorname{deg}(v, \mu_i(W)) \\
      =& M_\nu d_{u, W}
    \end{align*}

    We can now show how the three types of polynomials may be derived for
    $\mu_{i+1}$. If $u$ and $v$ are vertices such that $v \notin \mu_{i}[u]$,
    then trivially $e_{i+1, u, v} = \underline{e_{i, u, v}}$. If, on the other
    hand, $v \in \mu_i[u]$ but $v \notin \mu_{i+1}[u]$, we show that $e_{i+1,
    u, v}$ is derivable. Let $c = \operatorname{deg}(u, W) -
    \operatorname{deg}(v, \mu_i(W))$. Note that $c$ is invertible in
    $\mathbb{F}_p$ since $0 \leq \operatorname{deg}(u, W),
    \operatorname{deg}(v, \mu_i(W)) < n < p$ and since $v \in \mu_i[u]
    \setminus \mu_{i+1}[u]$ implies that $\operatorname{deg}(u, W) \neq
    \operatorname{deg}(v, \mu_i(W))$. We have:

    \begin{align*}
       & (x_{u, v} \underline{d_{u, W} M_\nu} + M_\nu \sum_{\mathclap{v' \in \mu_i[u] \setminus \{v\}}} \operatorname{deg}(v', \mu_i(W)) \underline{\operatorname{RowNZ}_{u; v, v'}} + \operatorname{deg}(v, \mu_i(W)) M_\nu \underline{\operatorname{Bin}_{u, v}}) \\
      =& (d_{u, W} + \sum_{\mathclap{v' \in \mu_i[u] \setminus \{v\}}} \operatorname{deg}(v', \mu_i(W)) x_{u, v'} + \operatorname{deg}(v, \mu_i(W)) (x_{u, v} - 1)) x_{u, v} M_\nu \\
      =& (\operatorname{deg}(u, W) - \sum_{\mathclap{v' \in \mu_i[u]}} x_{u, v'} \operatorname{deg}(v', \mu_i(W)) + \sum_{\mathclap{v' \in \mu_i[u] \setminus \{v\}}} \operatorname{deg}(v', \mu_i(W)) x_{u, v'} \\
       & \qquad + \operatorname{deg}(v, \mu_i(W)) (x_{u, v} - 1)) x_{u, v} M_\nu \\
      =& (\operatorname{deg}(u, W) - x_{u, v} \operatorname{deg}(v, \mu_i(W)) + \operatorname{deg}(v, \mu_i(W)) (x_{u, v} - 1)) x_{u, v} M_\nu \\
      =& (\operatorname{deg}(u, W) - \operatorname{deg}(v, \mu_i(W))) x_{u, v} M_\nu \\
      =& c e_{i+1, u, v}
    \end{align*}
    From this, we can easily derive $e_{i+1, u, v}$ by multiplying with
    $c^{-1}$.

    We may derive $m_{i+1, u}$ for any vertex $u$ and $w_{i+1, v}$ for any vertex $v$ as:

    \begin{minipage}[t]{0.45\textwidth}
    \begin{align*}
      m_{i+1, u} &= S^m_{i+1, u} M_\nu \\
      &= S^m_{i, u} M_\nu - \sum_{\mathclap{v \in \mu_i[u] \setminus \mu_{i+1}[u]}} x_{u, v} M_\nu \\
      &= \underline{m_{i, u}} - \sum_{\mathclap{v \in \mu_i[u] \setminus \mu_{i+1}[u]}} \underline{e_{i+1, u, v}}
    \end{align*}
    \end{minipage}
    \begin{minipage}[t]{0.45\textwidth}
    \begin{align*}
      w_{i+1, v} &= S^w_{i+1, v} M_\nu \\
      &= S^w_{i, v} M_\nu - \sum_{\mathclap{u \in \mu^{-1}_i[v] \setminus \mu^{-1}_{i+1}[v]}} x_{u, v} M_\nu \\
      &= \underline{w_{i, v}} - \sum_{\mathclap{u \in \mu^{-1}_i[v] \setminus \mu^{-1}_{i+1}[v]}} \underline{e_{i+1, u, v}}
    \end{align*}
    \end{minipage}

    \end{induction}
			
	\end{proof}

\end{document}